\colorlet{dblue}{blue!40!black}
\newtheorem{theorem}{Theorem}
\newtheorem{lemma}[theorem]{Lemma}
\newtheorem{corollary}[theorem]{Corollary}
\newtheorem{proposition}[theorem]{Proposition}
\theoremstyle{definition}
\newtheorem{definition}[theorem]{Definition}
\newtheorem{remark}[theorem]{Remark}
\newtheorem{example}[theorem]{Example}
\newcounter{environment}
\newcommand{\store}[4]{
  \if\relax\detokenize{#2}\relax
  \expandafter\def\csname store#3\endcsname{\begin{#1}\label{#3}#4\end{#1}}%
  \else
  \expandafter\def\csname store#3\endcsname{\begin{#1}[#2]\label{#3}#4\end{#1}}%
  \fi
  \csname store#3\endcsname%
}
\newcommand{\use}[1]{{
  \renewcommand{\label}[1]{}%
  \renewcommand{\qedappendix}{}%
  \renewcommand{\footnote}[1]{}%
  \setcounter{environment}{\arabic{theorem}}%
  \setcounterref{theorem}{#1}%
  \addtocounter{theorem}{-1}%
  \csname store#1\endcsname%
  \setcounter{theorem}{\arabic{environment}}%
}}
\newcommand{\pbpostrong}{PBPO$^{+}$\xspace}
\newcommand{\catname}[1]{{\normalfont\textbf{#1}}\xspace}
\newcommand{\Graph}{\catname{Graph}}
\newcommand{\FinGraph}{\catname{FinGraph}}
\newcommand{\GraphLattice}[1]{{\normalfont\textbf{Graph}$^{#1}$\xspace}}
\newcommand{\FinGraphLattice}[1]{{\normalfont\textbf{FinGraph}$^{#1}$\xspace}}
\newcommand{\labels}{\mathcal{L}}
\newcommand{\mono}{\rightarrowtail}
\newcommand{\iso}{\cong}
\tikzset{default/.style={
  thick,
  every node/.style={circle},
  level distance=12mm, 
  inner sep=.5mm}}
\tikzset{smallCircle/.style={circle,fill=black,inner sep=0mm,outer sep=1mm,minimum size=1mm}}
\tikzset{paint/.style={very thick,draw=#1!50!black,fill=#1,opacity=.4}}
\tikzset{paintopaque/.style={very thick,draw=#1!50!black!60,fill=#1!60}}
\tikzset{loopl/.style={out=140,in=210,looseness=10}}
\tikzset{loopr/.style={out=-40,in=30,looseness=10}}
\tikzset{loopb/.style={out=-40-90,in=30-90,looseness=10}}
\tikzset{loopt/.style={out=-40+90,in=30+90,looseness=10}}
\tikzset{loop/.style={out=-30+#1,in=30+#1,looseness=10}}
\tikzset{thinloop/.style={out=-20+#1,in=20+#1,looseness=20}}
\tikzset{emptystep/.style={-,dotted,line cap=round,dash pattern=on 0 off 3.00000}}
\tikzset{b/.style={anchor=north,at=(#1.south)}}
\tikzset{br/.style={anchor=north west,at=(#1.south east)}}
\tikzset{bl/.style={anchor=north east,at=(#1.south west)}}
\tikzset{bw/.style={anchor=north west,at=(#1.south west)}}
\tikzset{be/.style={anchor=north east,at=(#1.south east)}}
\tikzset{a/.style={anchor=south,at=(#1.north)}}
\tikzset{ar/.style={anchor=south west,at=(#1.north east)}}
\tikzset{al/.style={anchor=south east,at=(#1.north west)}}
\tikzset{aw/.style={anchor=south west,at=(#1.north west)}}
\tikzset{ae/.style={anchor=south east,at=(#1.north east)}}
\tikzset{r/.style={anchor=west,at=(#1.east)}}
\tikzset{l/.style={anchor=east,at=(#1.west)}}
\tikzset{rn/.style={anchor=north west,at=(#1.north east)}}
\tikzset{eta/.style={very thick,->,cblue!90!black}}
\tikzset{beta/.style={very thick,->,corange!80!white!90!black}}
\tikzset{devcirc/.style={circle,draw,fill=white,inner sep=0,minimum size=4\pgflinewidth}}
\tikzset{dev/.style={postaction={decorate},decoration={
  markings,
  mark=at position .5 with \node [devcirc] {};}}
}
\tikzset{medium tree/.style={
    level 1/.style={sibling distance=17mm},
    level 2/.style={sibling distance=9mm},
    level 3/.style={sibling distance=5mm},
    level 4/.style={sibling distance=4mm},
  }}
\tikzset{startAt/.style={inner sep=0mm,r=#1,xshift=-1.2mm,yshift=.8mm}}
\newcommand{\arrowTriangle}[2]{
  \pgfpathmoveto{\pgfpoint{-0.01#1+#2}{.6#1}}
  \pgfpathlineto{\pgfpoint{#1+#2}{0}}
  \pgfpathlineto{\pgfpoint{-0.01#1+#2}{-.6#1}}
  \pgfusepathqfill
}
\newdimen\prearrowsize
\newdimen\arrowsize
\newdimen\temparrowsize
\newcommand{\arrowscale}{5}
\newcommand{\setarrowsize}{
  \arrowsize=0.000000001pt
  \prearrowsize=\arrowscale\pgflinewidth
  \normalizearrowsize
}
\newcommand{\normalizearrowsize}{
  \ifdim\prearrowsize>2mm
    \addtolength{\arrowsize}{2mm}
    \addtolength{\prearrowsize}{-2mm}
    \temparrowsize=0.5\prearrowsize
    \prearrowsize=\temparrowsize
  \else
  \fi

  \addtolength{\arrowsize}{\prearrowsize}
}
  \arrowTriangle{\arrowsize}{-0.1\arrowsize}
  \arrowTriangle{\arrowsize}{-0.1\arrowsize}
  \arrowTriangle{\arrowsize}{-0.1\arrowsize+.8\arrowsize}
  \arrowTriangle{\arrowsize}{-0.1\arrowsize}
  \arrowTriangle{\arrowsize}{-0.1\arrowsize+.8\arrowsize}
  \arrowTriangle{\arrowsize}{-0.1\arrowsize+1.6\arrowsize}
\tikzstyle{gyellow}=[draw=black!80,top color=white!50,bottom color=black!20]
\tikzstyle{gblue}=[draw=blue!50,top color=white,bottom color=blue!60]
\tikzstyle{gred}=[draw=red!50,top color=white,bottom color=red!60]
\tikzstyle{ggreen}=[draw=blue!80!green!90!black,top color=white,bottom color=blue!80!green!60]
\tikzstyle{roundNode}=[gyellow,thick,circle,minimum size=4mm,inner sep=0.5mm]
\definecolor{cblue}{rgb}{0,0.4,0.7}
\definecolor{clighterblue}{rgb}{0,0.6,1.0}
\colorlet{cred}{red}
\colorlet{cgreen}{green!80!black}
\colorlet{corange}{orange!70!red}
\colorlet{cpureorange}{orange}
\colorlet{cpurple}{clighterblue!50!cred}
\colorlet{clightblue}{clighterblue!50!cblue!40}
\colorlet{clightred}{cred!40}
\colorlet{clightgreen}{cgreen!80!cblue!40}
\colorlet{clightyellow}{corange!40!yellow!50}
\colorlet{clightorange}{cred!50!orange!40}
\colorlet{clightpurple}{clighterblue!50!cred!50}
\colorlet{cdarkred}{cred!70!black}
\colorlet{cdarkgreen}{cgreen!60!black}
\colorlet{cdarkblue}{cblue!60!black}
\colorlet{chighlight}{orange!50!yellow!60}
\tikzset{pgnode/.style={smallCircle,fill=white,draw=black,minimum size=1.3mm,outer sep=0.5mm}}
\tikzset{pgnodecolor/.style={pgnode,minimum size=4mm,scale=0.9}}
\tikzset{pgnodebig/.style={roundNode,gyellow,outer sep=1mm}}
\tikzset{pgrelation/.style={ultra thick,cblue!80!black,decorate,decoration={snake,amplitude=.4mm,segment length=4mm}}}
\tikzset{exi/.style={densely dotted}}
\tikzset{decweak/.style={-,green!50!black,very thick,opacity=0.7}}
\tikzset{decstrict/.style={->,orange,very thick,opacity=0.7}}
\tikzset{graphNode/.style={circle,draw=black,inner sep=.5mm,outer sep=1mm}}
\tikzset{sloop/.style={looseness=7}}
\tikzset{interconnect/.style={dotted,cred}}
\tikzset{match/.style={cdarkgreen,very thick}}
\tikzset{jigsaw/.style={circle,draw=black,minimum size=2mm,fill=white,minimum size=3.5mm}}
\colorlet{myblue}{blue!80!black}
\colorlet{mygreen}{cdarkgreen}
\colorlet{myred}{cred}
\colorlet{myorange}{corange}
\colorlet{mypurple}{blue!40!cred}
\tikzset{smalljigsaw/.style={rectangle,rounded corners=3mm,inner sep=1mm}}
\tikzset{epattern/.style={}}
\tikzset{eset/.style={draw=black!50}}
\tikzset{npattern/.style={rectangle,rounded corners=2mm,draw=black,inner sep=0.5mm,outer sep=.5mm,minimum size=4.5mm}}
\tikzset{nset/.style={npattern,draw=black!50,fill=white}}
\tikzset{label/.style={scale=0.85,inner sep=0,outer sep=0.5mm}}
\tikzset{short/.style={node distance=10mm}}
\newcommand{\annotate}[2]{
  \node at (#1.north east) [anchor=south west,xshift=-.75mm,yshift=-.75mm,label,outer sep=0mm] {#2};
}
\newcommand{\graphbox}[8]{
  \begin{scope}[xshift=#2,yshift=#3]
    \draw [rounded corners=2mm] (0,0) rectangle (#4,-#5);
    \node at (0,0mm) [anchor=north west,inner sep=1mm] {#1};
    \begin{scope}[xshift=#4/2+#6,yshift=#7] 
    #8
    \end{scope}
  \end{scope}
}
\newcommand{\transparentgraphbox}[8]{
{
  \transparent{0.5}
  \graphbox{#1}{#2}{#3}{#4}{#5}{#6}{#7}{#8}
  }
}
\newcommand{\vertex}[2]{%
  \begin{tikzpicture}[baseline=-1ex]%
    \node [rectangle,rounded corners=2mm,inner sep=0.5mm,fill=#2] {$#1$};%
  \end{tikzpicture}%
}
\newcommand{\rulescale}{0.9}
\newcommand*{\saveparinfo}[1]{%
  \expandafter\xdef\csname my@#1@parindent\endcsname{\the\parindent}%
  \expandafter\xdef\csname my@#1@parskip\endcsname{\the\parskip}%
}
\newcommand*{\useparinfo}[1]{%
  \parindent=\csname my@#1@parindent\endcsname \relax
  \parskip=\csname my@#1@parskip\endcsname \relax
}
\newcommand{\qedappendix}{\hfill \textcolor{blue}{$\circledast$}}
\DeclareMathAlphabet{\pazocal}{OMS}{zplm}{m}{n}
\renewcommand{\mathcal}{\pazocal}
\tikzset{graphNode/.style={circle,draw=black,inner sep=.5mm,outer sep=1mm}}
\begin{document}

\title{From Linear Term Rewriting to Graph Rewriting \\ with Preservation of Termination}


\author{
Roy Overbeek \qquad\qquad J\"{o}rg Endrullis
\institute{Vrije Universiteit Amsterdam\\
Amsterdam, The Netherlands}
\email{r.overbeek@vu.nl \qquad j.endrullis@vu.nl}
}
\def\titlerunning{From Linear Term Rewriting to Graph Rewriting with Preservation of Termination}
\def\authorrunning{R. Overbeek \& J. Endrullis}

\maketitle

\begin{abstract}%
Encodings of term rewriting systems (TRSs) into graph rewriting systems usually lose global termination, meaning the encodings do not terminate on all graphs. A typical encoding of the terminating TRS rule $a(b(x)) \to b(a(x))$, for example, may be indefinitely applicable along a cycle of $a$'s and $b$'s. 
Recently, we introduced \pbpostrong, a graph rewriting formalism in which rules employ a type graph to specify transformations and control rule applicability. In the present paper, we show that \pbpostrong allows for a natural encoding of linear TRS rules that preserves termination globally.
This result is a step towards modeling other rewriting formalisms, such as lambda calculus and higher order rewriting, using graph rewriting in a way that preserves properties like termination and confluence. We moreover expect that the encoding can serve as a guide for lifting TRS termination methods to \pbpostrong rewriting.
\end{abstract}

\saveparinfo{parinfo}%

\section{Introduction}
\label{sec:introduction}

\newcommand{\framework}[1]{{\mathcal{#1}}}
\newcommand{\encoding}{\mathcal{E}}

A \emph{rewriting framework} $\framework{F}$ consists of a set of \emph{objects} $O$ and a set of \emph{rewriting systems} $\mathcal{R}$. Each system $R \in \mathcal{R}$ is a set of \emph{rewrite rules}. Each rule $\rho \in R$ defines a particular \emph{rewrite relation} ${\to_\rho} \subseteq {O \times O}$ on objects, and the rules of $R$ collectively give rise to a general rewrite relation ${\to_R} = {\bigcup_{\rho \in R} {\to_\rho}}$. The usual definitions of string, cycle and term rewriting systems (TRSs), and the various definitions of term graph and graph rewriting formalisms, are instances of this abstract view.

Because terms can be viewed as generalizations of strings, term graphs as generalizations of terms, graphs as generalizations of terms graphs and cycles, etc., the question whether one framework can be encoded into another framework frequently arises naturally. The same is true when comparing the large variety of graph rewriting frameworks. Moreover, the properties such an encoding is expected to satisfy may vary. Let us therefore fix some vocabulary.

\begin{definition}[Encoding]
\label{def:encoding}
An \emph{encoding} $\encoding$ of a framework $\framework{F}$ into a framework $\framework{G}$ consists of a function $\encoding_O : O^\framework{F} \to O^\framework{G}$ on objects and a function $\encoding_\mathcal{R} : \mathcal{R}^\framework{F} \to \mathcal{R}^\framework{G}$ on rewrite systems. The subscript is usually omitted, since it will be clear from context which of $\encoding_O$ and $\encoding_\mathcal{R}$ is meant.

Given an encoding $\encoding$, a variety of properties of interest may be distinguished. We will say $\encoding$ is 
\begin{enumerate}
    \item \emph{step-preserving} if ${x \to_R^\framework{F} y} \; \Longrightarrow \; {\encoding(x) \to_{\encoding(R)}^\framework{G} \encoding(y)}$;
    \item \emph{closed} if 
    $
    {x \to_{\encoding(R)}^\framework{G} y} \text{ and } {x \iso \encoding(x') \text{ for an $x' \in O^\framework{F}$}}
    \;
    \Longrightarrow
    \;
     {y \iso \encoding(y') \text{ for some $y' \in O^\framework{F}$}} \text{ with } {x' \to_R y'}
     $;
    \item an \emph{embedding} if $\encoding$ is step-preserving and closed;
    \item \emph{globally $P$-preserving} (for a property $P$, such as termination or confluence), if whenever $R \in \mathcal{R}^\framework{F}$ satisfies  $P$, then so does the system
    $\encoding(R) \in \mathcal{R}^\framework{G}$ on all objects $O^\framework{G}$; and
    \item \emph{locally $P$-preserving}
    if whenever $R \in \mathcal{R}^\framework{F}$ satisfies  $P$, then so does the system
    $\encoding(R) \in \mathcal{R}^\framework{G}$ on the restricted domain of objects $ \encoding(O^\framework{F}) \subseteq O^\framework{G}$.
\end{enumerate}
\end{definition}

Consider the string rewrite rule $ab \to ba$ and its usual encoding $a(b(x)) \to b(a(x))$ as a term rewrite rule. This encoding is an embedding that preserves termination and confluence globally. The usual encoding as a cycle rewrite rule, by contrast, is step-preserving, but not closed, and neither termination- nor confluence-preserving.

Building on PBPO by Corradini et al.~\cite{corradini2019pbpo} and our own patch graph rewriting formalism~\cite{overbeek2020patch}, we recently proposed the \pbpostrong algebraic graph rewriting approach~\cite{overbeek2021pbpo}, in which rules employ a type graph to specify transformations and control rule applicability.
In the present paper we give an embedding of linear term rewrite systems into \pbpostrong{} that preserves global termination, despite being applicable to graphs that are not encodings of terms. This result requires powerful features (unsupported by standard approaches such as DPO~\cite{ehrig1973graph}), as two examples illustrate:
\begin{enumerate}
    \item For the encoding of $f(x,y) \to f(a,y)$ to be step-preserving, it must be possible to delete an arbitrary subgraph $x$ below $f$, while leaving
    the context above of $f$ and the subgraph corresponding to $y$ intact.
    \item For the encoding of $a(b(x)) \to b(a(x))$ to be terminating, the rule must not be applicable on a cycle of $a$'s and $b$'s.
\end{enumerate}

Apart from being an interesting expressiveness result for \pbpostrong, our result enables reduction-style termination arguments for linear, `term-like' \pbpostrong rewrite rules. Moreover, as we will elaborate in the discussion (Section~\ref{sec:discussion}), we believe our result has broader relevance for the development of termination techniques for graph rewriting, as well as the modeling of other rewrite formalisms.

The structure of the paper is as follows. In Section~\ref{sec:preliminaries}, we summarize the relevant categorical and TRS preliminaries. In Section~\ref{sec:pbpostrong}, we give a self-contained introduction to \pbpostrong and to
\GraphLattice{(\labels,\leq)}~\cite{overbeek2021pbpo}, a special category that combines well with \pbpostrong. In Section~\ref{sec:embedding:rewriting:systems}, we define an embedding of linear term rewriting into \pbpostrong rewriting over category \GraphLattice{(\labels,\leq)}. In Section~\ref{sec:termination}, we prove that the embedding is globally termination-preserving, using a novel zoning proof. Finally, we discuss the significance of our results in~Section~\ref{sec:discussion}.

\section{Preliminaries}
\label{sec:preliminaries}

We assume familiarity with various basic categorical notions, notations and results, including morphisms $X \to Y$, pullbacks and pushouts, monomorphisms (monos) $X \mono Y$ (note the different arrow notation) and identities $1_X : X \mono X$~\cite{mac1971categories,awodey2006category}.

\newcommand{\src}{\mathit{s}}
\newcommand{\tgt}{\mathit{t}}
\newcommand{\lbl}{\ell}
\newcommand{\lblv}{\lbl^V}
\newcommand{\lble}{\lbl^E}

\begin{definition}[Graph Notions]
    A (labeled) \emph{graph} $G$ consists of a set of vertices $V$, a set of edges $E$, source and target functions $\src,\tgt : E \to V$, and label functions $\lblv : V \to \labels$ and $\lble : E \to \labels$ for some label set $\labels$.
    A graph is \emph{unlabeled} if $\labels$ is a singleton.
    
    A \emph{premorphism} between graphs $G$ and $G'$ is a pair of maps $\phi = (\phi_V : V_G \to V_{G'}, \phi_E : E_G \to E_{G'})$ satisfying
    $(s_{G'}, t_{G'}) \circ \phi_E = \phi_V \circ (s_G, t_G)$.
    
    A \emph{homomorphism} is a label-preserving premorphism $\phi$, i.e., a premorphism satisfying $\lblv_{G'} \circ \phi_V = \lblv_{G}$ and $\lble_{G'} \circ \phi_E = \lble_{G}$.
\end{definition}

\begin{definition}[Category $\Graph$~\cite{ehrig2006}]
    The category $\Graph$ has graphs as objects, parameterized over some global (and usually implicit) label set $\labels$, and homomorphisms as arrows. $\FinGraph$ is the full subcategory of finite graphs.
\end{definition}

\newcommand{\xvar}{\mathcal{X}}
\newcommand{\ter}{\mathrm{Ter}(\Sigma, \xvar)}
\newcommand{\hole}{\Box}
\newcommand{\con}{\mathrm{Ter}(\Sigma, \xvar \uplus \{\,\hole\,\})}

The following TRS definitions are all standard~\cite{terese}.
\begin{definition}[Signature]
    A \emph{signature} $\Sigma$ consists of a non-empty set of \emph{function symbols} $f,g,\ldots \in \Sigma$, equipped with an \emph{arity function} $\# : \Sigma \to \mathbb{N}$. Nullary function symbols $a,b,\ldots$ are called \emph{constants}.
\end{definition}

\newcommand{\term}[1]{\mathbf{#1}}
\newcommand{\var}[1]{\mathrm{Var}(#1)}

\begin{definition}[Terms]
    The set of \emph{terms} $\term{l},\term{r}, \term{s},\term{t},\ldots \in \ter$ over a signature $\Sigma$ and an infinite set of variables $x,y,\ldots \in \xvar$ is defined inductively by:
    \begin{itemize}
        \item $x \in \ter$ for every $x \in \xvar$;
        \item if $f \in \Sigma$ with $\#(f) = n$, and $\term{t_1}, \ldots, \term{t_n} \in \ter$, then $f(\term{t_1}, \ldots, \term{t_n}) \in \ter$. If $n = 0$, we write $f$ instead of $f()$.
    \end{itemize}
    A term $\term{t}$ is \emph{linear} if every $x \in \xvar$ occurs at most once in $\term{t}$. We write $\var{\term{t}}$ to denote the set of variables occurring in $\term{t}$.
\end{definition}
 
\newcommand{\nat}{\mathbb{N}}
\newcommand{\pos}[1]{\textit{Pos}(#1)}
\newcommand{\posof}[2]{\textit{Pos}^{#1}(#2)}
\begin{definition}[Position]
\label{def:position}
  A \emph{position} $p$ is a sequence of integers, i.e., $p \in \nat^*$.
  The empty sequence is denoted by $\epsilon$. We write $pn$ (and $np$) to denote the right (and left) concatenation of a positive integer $n$ to a position $p$.
  
  \newcommand{\symbolatpos}[2]{{#1}(#2)}
  Every symbol occurrence in a term has a position associated with it. The position of the head symbol is $\epsilon$, and the position of the $i$-th ($i \geq 1$) symbol below a symbol with position $p$ is $pi$.
  For a term $\term{s}$ and a position $p$ in $s$, we write
  $\symbolatpos{\term{s}}{p}$ to denote the symbol at position $p$ in~$\term{s}$.
\end{definition}
 
\begin{definition}[Substitutions]
  A \emph{substitution} is a function $\sigma : \xvar \to \ter$. 
  For terms $\term{s} \in \ter$ we define $\term{s}\sigma \in \ter$ by 
  $x \sigma = \sigma(x)$ for $x \in \xvar$,
  and 
  $f(\term{t_1},\ldots,\term{t_n}) \sigma = f(\term{t_1}\sigma,\ldots,\term{t_n}\sigma)$ 
  for $f \in \Sigma$ and $\term{t_1},\ldots,\term{t_n} \in \ter$.
\end{definition}

\begin{definition}[Contexts]
  A \emph{context} $C[\;]$ is a term from $\con$ with exactly one occurrence of the \emph{hole} $\hole$. We write $C[\term{t}]$ for the term obtained by replacing the hole with $\term{t}$.
\end{definition}
 
\begin{definition}[Term Rewriting Systems]
  A \emph{term rewrite rule} is a pair of terms $\term{l} \to \term{r}$ satisfying $\term{l} \notin \xvar$ and $\var{\term{r}} \subseteq \var{\term{l}}$. The rule is \emph{linear} if both terms $\term{l}$ and $\term{r}$ are linear.
  A \emph{term rewriting system}  (\emph{TRS}) $\mathcal{R}$ is a set of term rewrite rules.
  The system $\mathcal{R}$ is linear if all its rules are.

  A TRS $\mathcal{R}$ induces a relation ${\to}$ on $\ter$, the \emph{rewrite relation of~$\mathcal{R}$}, as follows:
  $C[\term{l}\sigma] \to C[\term{r}\sigma]$ for every context $C$, substitution $\sigma$ and rule ${\term{l} \to \term{r}} \in \mathcal{R}$.
  The \emph{rewrite step} $C[\term{l}\sigma] \to C[\term{r}\sigma]$ is said to be an \emph{application of the rule $\rho = {\term{l} \to \term{r}}$ at position $p$}, where $p$ is the position of the hole in $C[\;]$.
\end{definition}

\section{\pbpostrong and \GraphLattice{(\labels,\leq)}}
\label{sec:pbpostrong}

We recently introduced \pbpostrong~\cite{overbeek2021pbpo} (short for \emph{PBPO with strong matching}), an algebraic rewriting formalism obtained by strengthening the matching mechanism of PBPO by Corradini et al.~\cite{corradini2019pbpo}. 
We believe \pbpostrong is of interest for at least three important reasons.

First, \pbpostrong is expressive: for $\Graph$ in particular, and assuming monic matching, we conjecture~\cite{overbeek2021pbpo} that \pbpostrong is able to faithfully model DPO, SPO~\cite{lowe1993algebraic}, SqPO\cite{corradini2006sesqui}, AGREE~\cite{corradini2015agree} and PBPO. 
More precisely, for any rule in such a formalism, there exists a \pbpostrong rule that generates exactly the same rewrite relation.

Second, \pbpostrong makes relatively weak assumptions on the underlying category: it is sufficient to require the existence of pushouts along monomorphisms and the existence of pullbacks. In particular,  adhesivity~\cite{lack2004adhesive}, assumed for DPO rewriting to ensure the uniqueness of pushout complements, is not required.

Third, we have defined a non-adhesive category called   \GraphLattice{(\labels,\leq)}~\cite{overbeek2021pbpo} that combines very nicely with \pbpostrong, allowing graph rewrite rules to easily model notions of relabeling, type systems, wildcards and variables. These notions have been significantly more challenging to define for DPO.

In this section we provide the necessary background on  \pbpostrong and
\GraphLattice{(\labels,\leq)}.

\newcommand{\picturerule}{
  \begin{tikzpicture}[node distance=11mm,l/.style={inner sep=.5mm},baseline=-6mm]
    \node (L) {$L$};
    \node (K) [right of=L] {$K$}; \draw [->] (K) to node [above] {$l$} (L);
    \node (LP) [below of=L] {$L'$};
      \draw [>->] (L) to node [left] {$t_L$} (LP);
    \node (KP) [below of=K] {$K'$};
      \draw [>->] (K) to node [right] {$t_K$} (KP);
      \draw [->] (KP) to node [below] {$l'$} (LP);
       \node at ([shift={(-4mm,-4mm)}]K) {PB};
    \node (R) [right of=K] {$R$}; 
      \draw [->] (K) to node [above] {$r$} (R);
  \end{tikzpicture}
}

\newcommand{\picturestep}{
  \begin{tikzpicture}[node distance=16mm,l/.style={inner sep=1mm},baseline=-7.5mm,v/.style={node distance=11mm}]
    \node (G) {$G_L$};
    \node (L) [left of=G] {$L$};
      \draw [>->] (L) to node [above,l] {$m$} (G);
    \node (L2) [below of=L,v] {$L$};
      \draw [>->] (L) to node [left,l] {$1_L$} (L2);
      \node at ([shift={(4mm,-4mm)}]L) {PB};
    \node (LP) [below of=G,v] {$L'$};
      \draw [>->] (L2) to node [above,l] {$t_L$} (LP);
      \draw [->] (G) to node [left,l] {$\alpha$} (LP);
     \node (GKP) [right of=G] {$G_K$};
     \draw [->] (GKP) to node [above,l] {$g_L$} (G);
     \node (KP) [below of=GKP,v] {$K'$};
     \draw [->] (GKP) to node [right,l] {$u'$} (KP);
     \draw [->] (KP) to node [above,l] {$l'$} (LP);
     \node at ([shift={(-4mm,-4mm)}]GKP) {PB};
     \node (K) [above of=GKP,v] {$K$};
     \draw [>->, dotted] (K) to node [left,l] {$!u$} (GKP);
     \node (R) [right of=K] {$R$};
     \draw [->] (K) to node [above,l] {$r$} (R);
     \node (GP) [below of=R,v] {$G_R$};
     \draw [->] (GKP) to node [below,l,pos=0.7] {$g_R$} (GP);
     \draw [->] (R) to node [right,l] {$w$} (GP);
     \node at ([shift={(-4mm,4mm)}]GP) {PO};
     
     
     \draw [>->] (K) to[out=-45,in=30,looseness=0.8] node[right,l,pos=0.85] {$t_K$} (KP);
  \end{tikzpicture}
}

\begin{definition}[\pbpostrong Rewriting~\cite{overbeek2021pbpo}] \label{def:pbpostrong:rewrite:step}
    A \emph{\pbpostrong rewrite rule} $\rho$ (left)
    and \emph{adherence morphism} $\alpha : G_L \to L'$ induce a \emph{rewrite step} $G_L \Rightarrow_\rho^\alpha G_R$ on arbitrary $G_L$ and $G_R$ if the properties indicated by the commuting diagram on the right hold
    \begin{center}
      \raisebox{9mm}{$\rho \ = \ $\picturerule} \hspace{1cm} \picturestep
    \end{center}
    where $u : K \to G_K$ is the unique mono satisfying $t_K = u' \circ u$ \cite[Lemma 11]{overbeek2021pbpo}. We write $G_L \Rightarrow_\rho G_R$ if $G_L \Rightarrow_\rho^\alpha G_R$ for some $\alpha$.
\end{definition}

In the rewrite rule diagram, $L$ is the \emph{lhs pattern} of the rule, $L'$ its \emph{type graph} and $t_L$ the \emph{typing} of $L$. Similarly for the \emph{interface} $K$. $R$ is the \emph{rhs pattern} or \emph{replacement for $L$}. The rewrite step diagram can be thought of as consisting of a match square (modeling an application condition), a pullback square for extracting (and possibly duplicating) parts of $G_L$, and finally a pushout square for gluing these parts along pattern $R$. The inclusion of the match square is the main aspect which differentiates \pbpostrong from PBPO: intuitively, it prevents $\alpha$ from collapsing context elements of $G_L$ onto the pattern $t_L(L) \subseteq L'$.

For the present paper, it suffices to restrict attention to rules in which $l'$ does not duplicate subgraphs.

\begin{definition}[Linear \pbpostrong Rule]
A \pbpostrong rule is \emph{linear} if the morphism $l' : K' \to L'$ is monic.
\end{definition}

\begin{remark}
For linear \pbpostrong rewriting, it is enough to assume the existence of pushouts and pullbacks along monomorphisms. An interesting question is whether these weakened requirements enable new use cases.
\end{remark}

The category \GraphLattice{(\labels,\leq)} is similar to $\Graph$. The difference is that it is assumed that the label set forms a complete lattice, and that morphisms do not decrease labels. The complete lattice requirement ensures that pushouts and pullbacks are well-defined.

\begin{definition}[Complete Lattice]
A \emph{complete lattice} $(\labels, \leq)$ is a poset such that all subsets $S$ of $\labels$ have a supremum (join) $\bigvee S$ and an infimum (meet) $\bigwedge S$. 
\end{definition}

\begin{definition}[Category \GraphLattice{(\labels,\leq)}~\cite{overbeek2021pbpo}]
For a complete lattice $(\labels, \leq)$, the category \GraphLattice{(\labels,\leq)} is the category in which objects are graphs are labeled from $\labels$, and arrows are graph premorphisms $\phi : G \to G'$ that satisfy $\lbl_G(x) \leq \lbl_{G'}(\phi(x))$ for all $x \in V_G \cup E_G$. We let \FinGraphLattice{(\labels,\leq)} denote the full subcategory of finite graphs.
\end{definition}

\begin{proposition}
\label{prop:mono:stable:pushout}
In {\GraphLattice{(\labels,\leq)}}, monomorphisms are stable under pushout.
\end{proposition}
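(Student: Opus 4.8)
The plan is to reduce the statement to two facts: that monomorphisms in \GraphLattice{(\labels,\leq)} are exactly the premorphisms that are injective on vertices and on edges, and that the underlying unlabeled graph of a pushout is the componentwise \Set-pushout, where injections are already known to be stable. Since injectivity is a purely set-theoretic condition on the vertex and edge maps, the labels will play no role once these two facts are in place, and completeness of $(\labels,\leq)$ will enter only to guarantee that the relevant labels exist.

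\textbf{Characterising monos.} First I would show that a morphism $\phi$ is monic iff $\phi_V$ and $\phi_E$ are both injective. The implication from componentwise injectivity to monic is immediate, since $\phi \circ p = \phi \circ q$ gives $\phi_V \circ p_V = \phi_V \circ q_V$ and $\phi_E \circ p_E = \phi_E \circ q_E$, whence $p = q$. For the converse I would exhibit test graphs built from the bottom label $\bigwedge\labels$ (available because $(\labels,\leq)$ is a complete lattice). To detect failure of vertex-injectivity, I use the single-vertex, edge-free graph $N$ labelled $\bigwedge\labels$: morphisms $N \to G$ correspond exactly to vertices of $G$ (the condition $\bigwedge\labels \leq \ell_G(v)$ is automatic), so $\phi_V(u) = \phi_V(v)$ with $u \neq v$ yields two distinct maps equalised by $\phi$. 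To detect failure of edge-injectivity, I use the graph $P$ consisting of a single edge together with its two endpoints, all labelled $\bigwedge\labels$; by the premorphism condition a morphism $P \to G$ is determined by the image of the edge (its endpoints are forced to be that edge's source and target), so morphisms $P \to G$ correspond to edges of $G$. Hence $\phi_E(d) = \phi_E(e)$ with $d \neq e$ again gives two distinct maps equalised by $\phi$ (they also agree on the endpoint vertices, since $\phi_V(s_G(d)) = s_{G'}(\phi_E(d)) = s_{G'}(\phi_E(e)) = \phi_V(s_G(e))$, and similarly for targets).

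\textbf{Pushouts on underlying graphs.} Next I would record how pushouts are formed. Given $f : A \to B$ and $g : A \to C$, the pushout $D$ has as its underlying graph the componentwise \Set-pushout: $V_D$ and $E_D$ are the quotients of $V_B \sqcup V_C$ and $E_B \sqcup E_C$ by the equivalences generated by identifying the images of $A$, with source and target induced, and each resulting element labelled by the join of the labels of its preimages (well defined by completeness of $\labels$). The underlying-graph functor $U$ from \GraphLattice{(\labels,\leq)} to unlabeled graphs with premorphisms preserves this pushout; one clean way to see this is that $U$ has a right adjoint, namely the functor sending a graph to the same graph with every label equal to $\bigvee\labels$, so $U$ is a left adjoint and preserves all colimits. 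Thus $V_D$ and $E_D$ really are the \Set-pushouts of the vertex and edge maps.

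\textbf{Conclusion and main obstacle.} Finally, assume $f : A \mono B$ is monic, hence by the characterisation $f_V$ and $f_E$ are injective; let $m' : C \to D$ be the pushout of $f$ along $g$. Since monos are injections in \Set and injections are stable under pushout in \Set, the maps $m'_V$ and $m'_E$ are injective, so by the characterisation $m'$ is again monic, as required. I expect the only genuinely delicate point to be the characterisation of monos: confirming that the single-edge test graph $P$ detects non-injectivity even when $d$ and $e$ are loops or share an endpoint (handled by the premorphism constraint forcing the endpoint images) and that the bottom label keeps the test morphisms legal. Everything about labels in the pushout itself is then subsumed by the observation that injectivity is detected on underlying graphs, where the complete-lattice structure is needed only to ensure the join-labels exist.
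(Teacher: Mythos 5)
Your proof is correct, and while it rests on the same underlying intuition as the paper's (labels cannot obstruct injectivity, and the underlying graph of the pushout is just the unlabeled pushout), the scaffolding is genuinely different, essentially running in the opposite direction. The paper works bottom-up: it forgets labels, forms the pushout in unlabeled \Graph, invokes stability of monos there via adhesivity of \Graph, and then manufactures the \GraphLattice{(\labels,\leq)} pushout by relabeling each element of the pushout object with the supremum of the labels of its preimages, asserting that the result is the pushout in \GraphLattice{(\labels,\leq)}. You work top-down: you characterize monos in \GraphLattice{(\labels,\leq)} as exactly the componentwise injective morphisms via $\bigwedge\labels$-labeled test objects, note that the label-forgetting functor is a left adjoint (its right adjoint relabels everything with $\bigvee\labels$) and hence preserves pushouts, and reduce to stability of injections under pushout in \Set. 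Each route buys something: yours is more self-contained and, notably, supplies a step the paper leaves implicit --- to apply adhesivity, the paper needs the unlabeled version of the monic leg $c : A \mono C$ to be monic in \Graph, i.e., that \GraphLattice{(\labels,\leq)}-monos are injective, which is precisely what your test-object argument establishes (and it genuinely needs the bottom label, hence completeness, as you observe). Conversely, your argument presupposes that pushouts exist in \GraphLattice{(\labels,\leq)}, since the adjunction only transports pushouts that exist; that existence is exactly what the paper's explicit sup-labeling construction provides, and your sketch of the same construction plays the role of the paper's ``easy to verify'' claim, so neither proof checks the universal property in full detail.
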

\begin{proof}
Assume given a span $B \stackrel{b}{\leftarrow} A \stackrel{c}{\mono} C$ in \GraphLattice{(\labels,\leq)}. Overloading names, consider the unlabeled version in \Graph, and construct the pushout $B \stackrel{m}{\to} D \stackrel{n}{\leftarrow} C$. Morphism $m : B \to D$ is monic, because monos are stable in the category of unlabeled graphs, by virtue of it being an adhesive category. Now for each $x \in V_D \cup E_D$, define the label function $\lbl(x)$ to be the supremum of all labels in the labeled preimages $m^{{-1}}(x)$ and $n^{{-1}}(x)$, and define the \GraphLattice{(\labels,\leq)} object $D_\lbl = (V_D, E_D, s_D, t_D, \lbl)$. Then it is easy to verify that $B \stackrel{m}{\mono} D_\lbl \stackrel{n}{\leftarrow} C$ is the pushout of 
$B \stackrel{b}{\leftarrow} A \stackrel{c}{\mono} C$ in \GraphLattice{(\labels,\leq)}.
\end{proof}

In this paper we will use the following simple complete lattice only.

\begin{definition}[Flat Lattice~\cite{overbeek2021pbpo}]
Let $\labels^{\bot,\top} = \labels \uplus \{ \bot, \top \}$. We define the \emph{flat lattice} induced by $\labels$ as the poset 
$(\labels^{\bot,\top}, {\leq})$, which has $\bot$ as a global minimum and $\top$ as a global maximum, and where all elements of $\labels$ are incomparable. In this context, we refer to $\labels$ as the \emph{base label set}.
\end{definition}

The following example is a variation of an example found in our previous paper~\cite[Example 40]{overbeek2021pbpo}. It exemplifies all relevant features of linear \pbpostrong rewriting in category \GraphLattice{(\labels,\leq)}.

\begin{example}[Rewrite Example]
\label{example:hard:overwriting}
  As vertex labels we employ the flat lattice induced by the base label set $\{\, a,b,c,\ldots \,\}$, and we assume edges are unlabeled for notational simplicity.
  The diagram
  {

\newcommand{\nodexa}{\vertex{x_1}{cblue!20}}
\newcommand{\nodexb}{\vertex{x_2}{cblue!20}}
\newcommand{\nodexc}{\vertex{x_3}{cblue!20}}
\newcommand{\nodexd}{\vertex{x_4}{cblue!20}}

\newcommand{\nodea}{\vertex{a}{cgreen!20}}
\newcommand{\nodeb}{\vertex{b}{cpurple!25}}

\newcommand{\nodeaa}{\vertex{a_1}{cgreen!20}}
\newcommand{\nodeab}{\vertex{a_2}{cgreen!20}}
\newcommand{\nodeba}{\vertex{b_1}{cpurple!25}}
\newcommand{\nodebb}{\vertex{b_2}{cpurple!25}}

\newcommand{\nodex}{\vertex{x}{cblue!20}}

\newcommand{\nodez}{\vertex{z}{cred!10}}
\newcommand{\nodeza}{\vertex{z_1}{cred!10}}
\newcommand{\nodezb}{\vertex{z_2}{cred!10}}

\begin{center}
  \scalebox{\rulescale}{
  \begin{tikzpicture}[->,node distance=12mm,n/.style={}]
    \graphbox{$L$}{0mm}{0mm}{35mm}{10mm}{-4mm}{-5.5mm}{
      \node [npattern] (x)
      {\nodex};
       \annotate{x}{$\bot$};
    }
    \graphbox{$K$}{36mm}{0mm}{35mm}{10mm}{-4mm}{-5.5mm}{
      \node [npattern] (x)
      {\nodex};
      \annotate{x}{$\bot$};
    }
    \graphbox{$R$}{72mm}{0mm}{35mm}{10mm}{-4mm}{-5.5mm}{
      \node [npattern] (x)
      {\nodex};
       \annotate{x}{$c$};
    }
    \graphbox{$G_L$}{0mm}{-11mm}{35mm}{20mm}{-4mm}{-5.5mm}{
      \node [npattern] (x) {\nodex};
        \annotate{x}{$a$};
      \node [npattern] (za) [right of=x] {\nodeza};
        \annotate{za}{$b$};
      \node [npattern] (zb) [short, below of=za] {\nodezb};
        \annotate{zb}{$c$};
        
        \draw [epattern] (x) to [bend left=20] node {} (za);
        \draw [epattern] (x) to [bend right=20] node {} (za);
        \draw [epattern] (za) to node {} (zb);
        \draw [epattern] (zb) to node {} (x);
        
        \draw [epattern,loop=-180,looseness=3] (zb) to node {} (zb);
    }
    \graphbox{$G_K$}{36mm}{-11mm}{35mm}{20mm}{-4mm}{-5.5mm}{
      \node [npattern] (x) {\nodex};
        \annotate{x}{$\bot$};
      \node [npattern] (za) [right of=x] {\nodeza};
        \annotate{za}{$b$};
      \node [npattern] (zb) [short, below of=za] {\nodezb};
        \annotate{zb}{$c$};
        
        \draw [epattern] (za) to node {} (zb);
        
        \draw [epattern,loop=-180,looseness=3] (zb) to node {} (zb);
    }
    \graphbox{$G_R$}{72mm}{-11mm}{35mm}{20mm}{-4mm}{-5.5mm}{
      \node [npattern] (x) {\nodex};
        \annotate{x}{$c$};
      \node [npattern] (za) [right of=x] {\nodeza};
        \annotate{za}{$b$};
      \node [npattern] (zb) [short, below of=za] {\nodezb};
        \annotate{zb}{$c$};
        
        \draw [epattern] (za) to node {} (zb);
        
        \draw [epattern,loop=-180,looseness=3] (zb) to node {} (zb);
    }
    \graphbox{$L'$}{0mm}{-32mm}{35mm}{11.3mm}{-4mm}{-6mm}{
      \node [npattern] (x)
      {\nodex};
       \annotate{x}{$\top$};
        \node [npattern] (z) [right of=x] {\nodez};
       \annotate{z}{$\top$};
        \draw [eset] (x) to [bend right=20] node {} (z);
         \draw [eset] (z) to [bend right=10] node {} (x);
         \draw [eset,loop=-20,looseness=3] (z) to node {} (z);
    }
    \graphbox{$K'$}{36mm}{-32mm}{35mm}{11.3mm}{-4mm}{-6mm}{
      \node [npattern] (x)
      {\nodex};
       \annotate{x}{$\bot$};
        \node [npattern] (z) [right of=x] {\nodez};
       \annotate{z}{$\top$};
         \draw [eset,loop=-20,looseness=3] (z) to node {} (z);
    }
    \transparentgraphbox{$R'$}{72mm}{-32mm}{35mm}{11.3mm}{-4mm}{-6mm}{
      \node [npattern] (x)
      {\nodex};
       \annotate{x}{$c$};
        \node [npattern] (z) [right of=x] {\nodez};
       \annotate{z}{$\top$};
         \draw [eset,loop=-20,looseness=3] (z) to node {} (z);
    }
  \end{tikzpicture}
  }
\end{center}

}

  \noindent
  displays a rule ($L,L',K,K',R$) which
  \begin{itemize}
      \item matches an arbitrarily labeled, loopless node $x$, in an arbitrary context;
      \item ``hard overwrites'' the label of $x$ to label $c$;
      \item disconnects $x$ from its component by deleting its incident edges; and
      \item leaves all other nodes, edges and labels unchanged.
  \end{itemize}
  The pushout $K' \xrightarrow{r'} R' \xleftarrow{t_R} R$ for span $K' \xleftarrow{t_K} K \xrightarrow{r} R$ is depicted as well (in lower opacity), because it shows the schematic effect of applying the rewrite rule. An application to a host graph $G_L$ is included in the middle row.

  With respect to the labeling, the example demonstrates how (i)~labels in $L$ serve as lower bounds for matching, (ii)~labels in $L'$ serve as upper bounds for matching, (iii)~labels in $K'$ can be used to decrease matched labels (so in particular, $\bot$ ``instructs'' to ``erase'' the label and overwrite it with $\bot$, and $\top$ ``instructs'' to preserve labels), and (iv)~labels in $R$ can be used to increase labels.
\end{example}

\section{Embedding Linear Term Rewriting Systems}
\label{sec:embedding:rewriting:systems}

We are now ready to define an encoding (Definition~\ref{def:encoding}) of linear term rewrite systems into \pbpostrong. We also show that the encoding is an embedding (Theorem~\ref{thm:encoding:is:embedding}). In the next section, we prove that the embedding is globally termination-preserving.
 
\newcommand{\upcxtnode}{\mathcal{C}}

\newcommand{\graphof}[1]{\mathrm{graph}(#1)}
\newcommand{\rootof}[1]{\mathrm{root}(#1)}

For defining the encoding of terms as graphs, the auxiliary notion of a rooted graph is convenient.

\newcommand{\groot}[1]{\tikz[baseline=(n.base)] \node [circle,draw=cblue!50!black,dotted,fill=cblue!10,inner sep=0.5mm] (n) {$#1$};}

\begin{definition}[Rooted Graph]
    A \emph{rooted graph} $(G, r)$ consists of a graph $G$ and a distinguished \emph{root} $r \in V_G$. We let $\graphof{(G,r)} = G$ and $\rootof{(G,r)} = r$.
    
    We usually omit $\graphof{\ldots}$  in places where a non-rooted graph is expected, since  confusion is unlikely to occur. In visual depictions, the root $r$ is highlighted in a circle \groot{r}. 
\end{definition}

\newcommand{\encode}[2]{\mathcal{E}(#1, #2)}

\newcommand{\enc}{\circ}

\begin{definition}[Term Encoding]
    Define the flat lattice $\Sigma^\enc$ for signatures $\Sigma$ by $\Sigma^\enc = (\Sigma \uplus {\nat^{+}})^{\bot,\top}$.
    
    For linear terms $\term{t} \in \ter$, we define the \emph{term encoding $\term{t}^\enc$ of $\term{t}$} as the $\Sigma^\enc$-labeled rooted graph
    $\term{t}^\enc = \mathcal{E}(\term{t},\epsilon)$, where $\encode{\term{t}}{p}$ is defined by clauses
    \[
    \encode{f(t_1, \ldots, t_n)}{p} =
    \begin{tikzcd}[column sep=6mm, row sep=-1mm]
                     & \groot{p^f} \arrow[ld, "1" description] \arrow[rd, "n" description] &                  \\
    \encode{\term{t_1}}{p1} & \cdots                                                      & \encode{\term{t_n}}{pn}
    \end{tikzcd}
    \]
    and $
    \encode{x}{p} = (x^\bot, x)
    $
    for $f \in \Sigma$, $\term{t_1},\ldots,\term{t_n} \in \ter$, $x \in \xvar$ and $p \in \nat^*$.
    The target of an edge pointing towards a rooted graph $(G,p')$ is $p'$.
    In these graphs, the identity of an edge with source $p$ and target $p'$ is $(p, p')$.
\end{definition}

Note that the term encoding always results in a tree, because the terms it operates on are linear.

\begin{definition}[Positions in Term Encodings]
    Analogous to positions in terms $\term{t}$ (Definition~\ref{def:position}), we assign positions to the nodes of $\term{t}^\enc$:
    $\rootof{\term{t}^\enc}$ is assigned position $\epsilon$; and
    if
    \begin{tikzcd}[column sep=0.5cm]
        v \arrow[r, "i"] & w
    \end{tikzcd}
    is an edge in $E_{\term{t}^\enc}$ (for $i \ge 1$) and $v$ has position $p$, then $w$ is assigned position $pi$.
    
    A translated rule $\rho^\enc$ is said to be \emph{applied at position $p$ in $\term{t}^\enc$} if the match morphism $m : L \mono \term{t}^\enc$ maps the root of $L$ onto the vertex with position $p$ in~$\term{t}^\enc$, and establishes a match.
\end{definition}

\newcommand{\uppercontext}[1]{\upcxtnode\mathrm{[}#1\mathrm{]}}
\newcommand{\lowercontext}[1]{{#1}{\downarrow_\xvar}}
\newcommand{\contextclosure}[1]{\uppercontext{\lowercontext{#1}}}

The following definition is used in the setting of rule encodings.

\begin{definition}[Context Closures]
    Let $G_r = (G,r)$ be a rooted graph.
    
    Assume $\upcxtnode \notin V_G$. The \emph{upper context closure} of $G_r$, denoted $\uppercontext{G_r}$, is the $r$-rooted graph obtained by adding a $\top$-labeled vertex $\upcxtnode$ and two $\top$-labeled edges with identities
    $(\upcxtnode, r)$ and $(\upcxtnode, \upcxtnode)$
    to $G$. Sources and targets are given by the first and second projections, respectively.
    
    For $x \in \xvar$, let $x'$ be fresh for $V_G$.
    The \emph{lower context closure} of $G$ w.r.t.\ a subset $\xvar \subseteq V_G$, denoted $\lowercontext{G}$, is the $r$-rooted graph obtained as follows: for every $x \in V_G \cap \xvar$, (i)~relabel $x$ to $\top$, and (ii)~add a $\top$-labeled vertex $x'$ and two $\top$-labeled edges $(x,x')$ and  $(x',x')$ to $G$.
    
    The \emph{context closure} of $G_r$ is defined as $\contextclosure{G_r}$.
\end{definition}

\noindent
\begin{minipage}[t]{.45\textwidth}
    \vspace*{-1.5ex}
    \begin{example}
        The term encoding $\term{t}^\enc$ of $\term{t} = f(g(x), a, h(y))$ and its context closure $\contextclosure{\term{t}^\enc}$ are shown on the right.
        Both graphs are rooted in $\epsilon$. (The edge identities are left implicit.)
    \end{example}
\end{minipage}\hfill%
\begin{minipage}[t]{.20\textwidth}
    $\term{t}^\enc =$\\[1ex]
    $
    \begin{tikzcd}[nodes={rectangle,outer sep=1mm,inner sep=0},row sep=7mm,column sep=6mm]
                       & \groot{\epsilon^f} \arrow[ld, "1"'] \arrow[d, "2"] \arrow[rd, "3"] &                    \\
    1^g \arrow[d, "1"] & 2^a                                                        & 3^h \arrow[d, "1"] \\
    x^\bot             &                                                            & y^\bot            
    \end{tikzcd}
    $
\end{minipage}%
\begin{minipage}[t]{.3\textwidth}
    \;\;\;\;$\contextclosure{\term{t}^\enc} =$\\[-1ex]
    $
    \begin{tikzcd}[nodes={rectangle,outer sep=1mm,inner sep=0},row sep=4mm,column sep=6mm]
                                                                 & \upcxtnode^\top \arrow[d, "\top"] \arrow["\top"', loop, distance=2em, in=215, out=150] &                                                              \\
                                                                 & \groot{\epsilon^f} \arrow[ld, "1"'] \arrow[d, "2"] \arrow[rd, "3"]                    &                                                              \\
    1^g \arrow[d, "1"]                                           & 2^a                                                                           & 3^h \arrow[d, "1"]                                           \\
    x^\top \arrow[d, "\top"]                                     &                                                                               & y^\top \arrow[d, "\top"]                                     \\
    x'^\top \arrow["\top"', loop, distance=2em, in=210, out=145] &                                                                               & y'^\top \arrow["\top"', loop, distance=2em, in=210, out=145]
    \end{tikzcd}
    $
\end{minipage}

\begin{definition}[Variable Heads and Symbol Vertices]
\label{def:variable:heads:symbol:vertices}
   For term encodings $\term{t}^\enc$,  the vertices in $x \in V_{\term{t}^\enc} \cap \xvar$ with $\lbl(x) = \bot$ are called \emph{variable heads}, and the remaining vertices labeled from $\Sigma$ are called \emph{symbol vertices}.
\end{definition}

\newcommand{\interface}[1]{\mathcal{I}(#1)}

\begin{definition}[Interface Graph]\label{def:interface:graph}
    The \emph{interface graph} $\interface{\term{t}}$ for a term $\term{t}$~is~the rooted graph $(G', \epsilon)$, where $G'$ is the discrete graph induced by $V_{G'} = \var{\term{t}} \cup \{ \epsilon \}$ and $\lblv_{G'}(v) = \bot$ for all $v \in V_{G'}$.
\end{definition}

\newcommand{\restrict}[2]{#1|_{#2}}

\begin{definition}[Rule Encoding]\label{def:translate:trs:rule}
    The \emph{rule encoding} $\rho^\enc$ of a linear term rewrite rule $\rho : \term{l} \to \term{r}$ over $\Sigma$ into a (linear) \pbpostrong{} rewrite rule over $\Sigma^\enc$-labeled graphs is defined as follows:
    \begin{align*}
        L &= \term{l}^\enc &
        K &= \interface{\term{r}} &
        R &= \term{r}^\enc &
        \\
        L' &= \contextclosure{\term{l}^\enc} & 
        K' &= \contextclosure{\interface{\term{r}}} &
    \end{align*}
    Here we implicitly consider the rooted graphs as graphs by forgetting their roots.
    Each of the morphisms $l$, $r$, $l'$, $t_L$, and $t_K$ map roots to roots and behave as inclusions otherwise.
\end{definition}

Observe that the rule encoding accounts for the special case where the right-hand side $\term{r}$ of the TRS rule is a variable $x$, in which case $r : K \to R$ is the morphism determined by $r(\epsilon) = r(x) = x$. (The case where the left-hand side $\term{l}$ is a variable is excluded by definition.)

\store{proposition}{}{prop:translation:well:defined:and:monicity}{
    In Definition~\ref{def:translate:trs:rule}, all of the morphisms are well-defined and uniquely determined, and the pullback property is satisified. Moreover, morphisms $l$, $l'$, $t_L$ and $t_K$ are monic, and $r$ is monic iff $\term{r}$ is not a variable. \qed
}

\begin{example}[Rule Encoding]
    The TRS rule $\rho = f(x,g(b),y) \to h(g(y), a)$ is encoded as the \pbpostrong{} rewrite rule $\rho^\enc$ given by
    {

\newcommand{\nodeeps}{\vertex{\epsilon}{cblue!20}}
\newcommand{\nodeOne}{\vertex{1}{corange!25}}
\newcommand{\nodeTwo}{\vertex{2}{corange!25}}
\newcommand{\nodeTwoOne}{\vertex{21}{corange!25}}
\newcommand{\nodex}{\vertex{x}{cgreen!25}}
\newcommand{\nodey}{\vertex{y}{cgreen!25}}
\newcommand{\nodexp}{\vertex{x'}{cgreen!12}}
\newcommand{\nodeyp}{\vertex{y'}{cgreen!12}}
\newcommand{\nodeC}{\vertex{\upcxtnode}{cred!12}}

\begin{center}
  \scalebox{\rulescale}{
  \begin{tikzpicture}[->,node distance=12mm,n/.style={}]
    \graphbox{$L$}{0mm}{0mm}{38mm}{31mm}{1mm}{-6mm}{
      \node [npattern] (eps) [short] {\nodeeps}; \annotate{eps}{$f$};
      \node [npattern] (g) [below of=eps,short] {\nodeTwo};
      \annotate{g}{$g$}
      \node [npattern] (b) [below of=g,short] {\nodeTwoOne};
      \annotate{b}{$b$}
      \node [npattern] (x) [left of=g,short] {\nodex}; \annotate{x}{$\bot$};
      \node [npattern] (y) [right of=g,short] {\nodey}; \annotate{y}{$\bot$};
      
      \draw [epattern] (eps) to[out=-160,in=90] node [above,label,inner sep=1mm] {$1$} (x);
      \draw [epattern] (eps) to node [left,label, inner sep=1mm] {$2$} (g);
      \draw [epattern] (eps) to[out=-20,in=90] node [above,label,inner sep=1mm] {$3$} (y);
      \draw [epattern] (g) to node [left,label,inner sep=1mm] {$1$} (b);
    }
    \graphbox{$K$}{39mm}{0mm}{38mm}{31mm}{1mm}{-6mm}{
      \node [npattern] (eps) [short] {\nodeeps}; \annotate{eps}{$\bot$};
      \node (g) [below of=eps,short] {};
      \annotate{g}{}
      \node [npattern] (y) [right of=g,short] {\nodey}; \annotate{y}{$\bot$};
    }
    \graphbox{$R$}{78mm}{0mm}{38mm}{31mm}{1mm}{-6mm}{
      \node [npattern] (eps) [short] {\nodeeps}; \annotate{eps}{$h$};
      \node (xx) [below of=eps,short] {};
      \node [npattern] (g) [left of=xx,short] {\nodeOne}; \annotate{g}{$g$};
      \node [npattern] (y) [below of=g,short] {\nodey}; \annotate{y}{$\bot$};
      \node [npattern] (a) [right of=xx,short] {\nodeTwo}; \annotate{a}{$a$};
      
      \draw [epattern] (eps) to[out=-160,in=90] node [above,label,inner sep=1mm] {$1$} (g);
      \draw [epattern] (eps) to[out=-20,in=90] node [above,label,inner sep=1mm] {$2$} (a);
      \draw [epattern] (g) to node [left,label,inner sep=1mm] {$1$} (y);
    }
    \graphbox{$L'$}{0mm}{-32mm}{38mm}{51mm}{1mm}{-16mm}{
      \node [npattern] (eps) [short] {\nodeeps}; \annotate{eps}{$f$};
      \node [npattern] (g) [below of=eps,short] {\nodeTwo};
      \annotate{g}{$g$}
      \node [npattern] (b) [below of=g,short] {\nodeTwoOne};
      \annotate{b}{$b$}
      \node [npattern] (x) [left of=g,short] {\nodex}; \annotate{x}{$\top$};
      \node [npattern] (y) [right of=g,short] {\nodey}; \annotate{y}{$\top$};
      \node [npattern] (xp) [below of=x,short] {\nodexp}; \annotate{xp}{$\top$};
      \node [npattern] (yp) [below of=y,short] {\nodeyp}; \annotate{yp}{$\top$};
      \node [npattern] (c) [above of=eps,short] {\nodeC}; \annotate{c}{$\top$};

      \draw [epattern] (eps) to[out=-160,in=90] node [above,label,inner sep=1mm] {$1$} (x);
      \draw [epattern] (eps) to node [left,label, inner sep=1mm] {$2$} (g);
      \draw [epattern] (eps) to[out=-20,in=90] node [above,label,inner sep=1mm] {$3$} (y);
      \draw [epattern] (g) to node [left,label,inner sep=1mm] {$1$} (b);
       \draw [eset] (x) to node [left,label, inner sep=1mm] {$\top$} (xp);
       \draw [eset] (y) to node [left,label, inner sep=1mm] {$\top$} (yp);
       \draw [eset,loop=270,looseness=3] (xp) to node [below,label] {$\top$} (xp);
       \draw [eset,loop=270,looseness=3] (yp) to node [below,label] {$\top$} (yp);
       \draw [eset] (c) to node [left,label, inner sep=1mm] {$\top$} (eps);
       \draw [eset,loop=180,looseness=3] (c) to node [left,label] {$\top$} (c);
    }
    \graphbox{$K'$}{39mm}{-32mm}{38mm}{51mm}{1mm}{-16mm}{
      \node [npattern] (eps) [short] {\nodeeps}; \annotate{eps}{$\bot$};
      \node (g) [below of=eps,short] {};
      \node [npattern] (y) [right of=g,short] {\nodey}; \annotate{y}{$\top$};
      \node [npattern] (yp) [below of=y,short] {\nodeyp}; \annotate{yp}{$\top$};
      \node [npattern] (c) [above of=eps,short] {\nodeC}; \annotate{c}{$\top$};
      
       \draw [eset] (y) to node [left,label, inner sep=1mm] {$\top$} (yp);;
       \draw [eset,loop=270,looseness=3] (yp) to node [below,label] {$\top$} (yp);
       \draw [eset] (c) to node [left,label, inner sep=1mm] {$\top$} (eps);
       \draw [eset,loop=180,looseness=3] (c) to node [left,label] {$\top$} (c);
    }
    \graphbox{$R'$}{78mm}{-32mm}{38mm}{51mm}{1mm}{-16mm}{
      \node [npattern] (eps) [short] {\nodeeps}; \annotate{eps}{$h$};
      \node (xx) [below of=eps,short] {};
      \node [npattern] (g) [left of=xx,short] {\nodeOne}; \annotate{g}{$g$};
      \node [npattern] (y) [below of=g,short] {\nodey}; \annotate{y}{$\top$};
      \node [npattern] (a) [right of=xx,short] {\nodeTwo}; \annotate{a}{$a$};
      
      \draw [epattern] (eps) to[out=-160,in=90] node [above,label,inner sep=1mm] {$1$} (g);
      \draw [epattern] (eps) to[out=-20,in=90] node [above,label,inner sep=1mm] {$2$} (a);
      \draw [epattern] (g) to node [left,label,inner sep=1mm] {$1$} (y);
      
      \node [npattern] (yp) [below of=y,short] {\nodeyp}; \annotate{yp}{$\top$};
      \node [npattern] (c) [above of=eps,short] {\nodeC}; \annotate{c}{$\top$};
      
       \draw [eset] (y) to node [left,label, inner sep=1mm] {$\top$} (yp);;
       \draw [eset,loop=-180,looseness=3] (yp) to node [left,label] {$\top$} (yp);
       \draw [eset] (c) to node [left,label, inner sep=1mm] {$\top$} (eps);
       \draw [eset,loop=180,looseness=3] (c) to node [left,label] {$\top$} (c);
    }
  \end{tikzpicture}.
  }
\end{center}

}

    An application of this rule can be thought of as binding head variable $x$ and $y$ of $L$ to the roots of two subterms. These subterms and the context are then uniquely captured by $L'$ (by virtue of the strong match property), and correctly rearranged around $R$ by the rewrite step.
\end{example}

Rule encodings extend to rewrite system encodings in the obvious way.

\begin{definition}[Rewrite System Encoding]
The \emph{rewrite system encoding} $R^\enc$ of a linear TRS $R$ is $\{ \rho^\enc \mid \rho \in R \}$.
\end{definition}

All the encodings we have introduced have obvious inverses.

\newcommand{\decode}{\mathit{decode}}
\begin{definition}[Decoding]
For term/rule/system encodings $x^\enc$, we define the inverse $\decode(x^\enc) = x$.
\end{definition}

\begin{proposition}[Root Mapping Determines Adherence]
\label{prop:root:determines:adherence}
Let $\rho = \term{l} \to \term{r}$ be a linear term rewrite rule. If $\rho^\enc$ is applied at position $p$ in $\term{s}^\enc$, then a unique $\alpha : \term{s}^\enc \to \contextclosure{\term{l}^\enc}$ exists that establishes a strong match, i.e., that makes
\[
   \begin{tikzpicture}[node distance=16mm, l/.style={inner sep=1mm},baseline=-7.5mm,v/.style={node distance=15mm}]
        \node (G) {$\term{s}^\enc$};
        \node (L) [left of=G] {$\term{l}^\enc$};
          \draw [>->] (L) to node [above,l] {$m$} (G);
        \node (L2) [v,below of=L] {$\term{l}^\enc$};
          \draw [>->] (L) to node [left,l] {$1_{\term{l}^\enc}$} (L2);
          \node at ([shift={(4mm,-4mm)}]L) {PB};
        \node (LP) [v,below of=G] {$\contextclosure{\term{l}^\enc}$};
          \draw [>->] (L2) to node [above,l] {$t_L$} (LP);
          \draw [->] (G) to node [left,l] {$\alpha$} (LP);
          \end{tikzpicture}
    \]
a pullback square.
\end{proposition}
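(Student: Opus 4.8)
The plan is to construct $\alpha$ explicitly from the tree structure of $\term{s}^\enc$, verify it yields a pullback, and then argue that the pullback (strong-match) requirement leaves no freedom, giving uniqueness. I first analyse the match $m : \term{l}^\enc \mono \term{s}^\enc$. Both graphs are trees (term encodings), all edge labels lie in $\nat^{+}$ and all symbol-vertex labels lie in $\Sigma$, and in the flat lattice $\Sigma^\enc$ any two such base labels satisfy $a \leq b$ only when $a = b$. Hence the morphism inequality $\lbl(e) \leq \lbl(m(e))$ forces $m$ to preserve edge labels and symbol labels exactly, so $m$ embeds $\term{l}^\enc$ as the subtree of $\term{s}^\enc$ rooted at position $p$, sending each variable head of $\term{l}^\enc$ to the root $v_x$ of a bound subterm; by linearity of $\term{l}$ these bound subterms are pairwise disjoint. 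This partitions the vertices and edges of $\term{s}^\enc$ into the \emph{matched} part $\image{m}$, the \emph{lower-context} part lying strictly below some $v_x$, and the \emph{upper-context} part lying outside the subtree rooted at $p$. I then define $\alpha$ region-wise: on the matched part $\alpha = t_L \circ m^{\inverse}$; each lower-context element below $v_x$ is sent to $x'$, the edge from $v_x$ into its subterm to $(x,x')$, and all edges internal to that subterm to the loop $(x',x')$; and each upper-context element is sent to $\upcxtnode$, the edge entering $p$ to $(\upcxtnode, r)$, and all remaining upper-context edges to the loop $(\upcxtnode,\upcxtnode)$.

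That $\alpha$ is a \GraphLattice{(\labels,\leq)}-morphism is routine: the source/target (premorphism) conditions hold by construction, and the label conditions are immediate since every non-matched element is mapped onto a $\top$-labelled element of $\contextclosure{\term{l}^\enc}$, while $\alpha \circ m = t_L$ by definition. To see that the square is a pullback, I compute the pullback $P$ of $\term{l}^\enc \xrightarrow{t_L} \contextclosure{\term{l}^\enc} \xleftarrow{\alpha} \term{s}^\enc$, whose carrier consists of the agreeing pairs with meet labels. Since $t_L$ is monic (Proposition~\ref{prop:translation:well:defined:and:monicity}) with image the skeleton copy of $\term{l}^\enc$ inside $\contextclosure{\term{l}^\enc}$, and since $\alpha$ maps precisely the matched elements onto that skeleton, every agreeing pair has the form $(v, m(v))$. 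Thus the comparison map $c : \term{l}^\enc \to P$, $v \mapsto (v, m(v))$, is a bijection on vertices and edges; it is a morphism because $m$ is (so $\lbl(v) \leq \lbl(m(v))$), and its inverse is a morphism because the meet label $\lbl(v) \wedge \lbl(m(v))$ lies below $\lbl(v)$. Hence $c$ is an isomorphism and the square is a pullback.

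For uniqueness, suppose $\alpha'$ also makes the square a pullback. Commutativity pins $\alpha'$ on the matched part to $t_L \circ m^{\inverse} = \alpha$. The strong-match property then forces every non-matched element off the skeleton $\image{t_L}$: were $\alpha'(w) = t_L(v)$ for a non-matched $w$, the pair $(v,w)$ would be a vertex of the pullback whose projection to $\term{s}^\enc$ is $w \notin \image{m}$, contradicting that this projection is $m$. Since the only non-skeleton vertices of $\contextclosure{\term{l}^\enc}$ are $\upcxtnode$ and the various $x'$, I can propagate the correct values along edges using the deliberately sparse edge structure of the type graph: $x$ has the unique out-edge $(x,x')$ and $x'$ the unique out-edge $(x',x')$, which forces, by induction down each bound subtree, every lower-context element below $v_x$ onto $x'$; dually $r$ has the unique in-edge $(\upcxtnode, r)$ and $\upcxtnode$ the unique in-edge $(\upcxtnode,\upcxtnode)$, so---the skeleton being excluded---every upper-context element is forced onto $\upcxtnode$, spreading from the parent of $p$ up the ancestor path and down every side subtree (using that the complement of the subtree at $p$ is connected). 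Therefore $\alpha' = \alpha$.

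The step I expect to be the main obstacle is this uniqueness propagation. The existence construction and the pullback check are essentially bookkeeping, but arguing that the pullback condition admits no alternative adherence relies delicately on two ingredients working together: the strong-match property, which rules out mapping context material onto the skeleton, and the minimal edge structure (single out-edges at $x$ and $x'$, single in-edges at $r$ and $\upcxtnode$) that the context closures install, which is exactly what propagates the forced values across the whole of $\term{s}^\enc$.
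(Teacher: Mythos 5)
Your proof is correct and takes essentially the same route as the paper's: you define $\alpha$ on the match image by requiring $t_L = \alpha \circ m$, send every remaining element to the elements added by the context closure, and argue this assignment is forced because the pullback (strong-match) property prevents context elements from landing on the image of $t_L$. The paper compresses your explicit region-wise construction, pullback computation, and propagation argument into a few sentences ("the elements not in $m(\mathbf{l}^\circ)$ can be mapped onto the appropriate elements added by the context closure, and only in one way as to not overlap with $t_L$"), but the underlying argument is the same.
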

\begin{proof}
By definition of applying at a position $p$, $m$ maps $\rootof{\term{l}^\enc}$ onto position $p$ of $\term{s}^\enc$, fully determining $m$ to map nodes with position $q$ in $\term{l}^\enc$ onto nodes with position $pq$ in $\term{s}^\enc$. A node in $\term{l}^\enc$ is either a symbol vertex or a variable head. For symbol vertices, any $m$ must preserve labels.
Variable heads (labeled with $\bot$) are mapped by $m$ onto either (i)~vertex $\sigma(x)$ labeled with $\bot$ if $\sigma(x) \in \xvar$ is a variable, or (ii)~vertex $pq$ labeled with $f \in \Sigma$ if $x$ is substituted for some non-variable term $\sigma(x) = f(\term{t_1}, \ldots, \term{t_n})$ ($n \geq 0$).
    
On the image $m(\term{l}^\enc)$, define $\alpha$ such that $t_L = \alpha \circ m$. The labels of symbol vertices are thereby preserved, and the labels in the head variables of $\term{l}^\enc$ are increased to $\top$. The elements not in $m(\term{l}^\enc)$ can be mapped onto the appropriate elements added by the context closure, and only in one way as to not overlap with $t_L$. Because $t_L$ does not map onto these closures, pulling $\alpha$ along $t_L$ gives the required pullback square.
\end{proof}

\store{lemma}{Match Determinism}{lem:translation:determinism}{
Let $\rho = \term{l} \to \term{r}$ be a linear term rewrite rule. 
    If $\rho^\enc$ is applied at position $p$ in $\term{s}^\enc$ and gives rise to a step $\term{s}^\enc \to G$, then $G$ is uniquely determined up to isomorphism.
}
\begin{proof}
By Proposition~\ref{prop:root:determines:adherence}, adherence $\alpha$ is completely determined, and by general categorial properties, the pullback of $\alpha$ along $l'$ gives a unique result up to isomorphism, and so does the final pushout.
\end{proof}

\begin{proposition}
\label{prop:mono:encoding:allows:decomposition}
    If $m : \term{l}^\enc \mono \term{s}^\enc$ is a mono, then $\term{s}^\enc = (C[\term{l}\sigma])^\enc$ for some context $C$ and substitution $\sigma$. Moreover, the position of $m(\rootof{\term{l}^\enc})$ in $\term{s}^\enc$ equals the position of $\hole$ in $C[\;]$.
\end{proposition}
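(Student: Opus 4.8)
The plan is to reduce this graph-level statement to a term-level decomposition, exploiting that both $\term{l}^\enc$ and $\term{s}^\enc$ are trees whose nodes carry canonical positions, together with the rigidity that the flat lattice $\Sigma^\enc$ imposes on how a morphism may act on labels. Write $p$ for the position of $v_0 = m(\rootof{\term{l}^\enc})$ in $\term{s}^\enc$; this is well-defined since a term encoding is a tree, so every vertex has a unique position. The central step I would establish first is a \emph{position-coherence} claim: for every position $q$ occurring in $\term{l}^\enc$, the morphism $m$ sends the node at position $q$ to the node at position $pq$ in $\term{s}^\enc$.

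I would prove this by induction on $q$, leaning entirely on the flat-lattice structure of $\Sigma^\enc$. Since distinct base labels are incomparable, $\bot$ is the global minimum, and a term encoding contains neither $\top$-labels nor $\bot$-labeled edges, the morphism inequality $\lbl(z) \le \lbl(m(z))$ collapses to exact preservation in two cases: every edge labeled $i \in \nat^+$ maps to an edge labeled exactly $i$; and every symbol vertex labeled $f \in \Sigma$ maps to a symbol vertex labeled exactly $f$ (only variable heads carry $\bot$, and $f \not\le \bot$). For the inductive step, a symbol vertex at position $q$, labeled $f$ with $\#(f) = n$, carries outgoing edges labeled $1,\ldots,n$; their images are outgoing edges of $m(\text{node } q)$, which sits at position $pq$ by hypothesis, and in the tree $\term{s}^\enc$ the unique outgoing edge labeled $i$ targets position $pqi$. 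Variable heads are leaves and contribute no further positions. In passing, this shows distinct positions of $\term{l}^\enc$ land on distinct positions of $\term{s}^\enc$, so the injectivity expected of a mono is recovered from the position bookkeeping itself.

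With the claim in hand I would decode. Each variable $x \in \var{\term{l}}$ occurs at a unique position $q_x$ in the linear term $\term{l}$ (equivalently, $x$ is a variable head of $\term{l}^\enc$ at position $q_x$), so I can define a substitution $\sigma$ by $\sigma(x) = \restrict{\term{s}}{p q_x}$ for $x \in \var{\term{l}}$ and $\sigma(y) = y$ otherwise, and a context $C[\;]$ by replacing the subterm of $\term{s}$ at position $p$ with $\hole$, so that the hole sits at position $p$. It then remains to check that $\term{l}\sigma$ and $\restrict{\term{s}}{p}$ agree at every position: at the non-variable positions of $\term{l}$ this is the symbol preservation of the second case above, transported through the position-coherence claim; at the variable positions $q_x$ it holds by the definition of $\sigma$. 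Hence $\term{l}\sigma = \restrict{\term{s}}{p}$, so $C[\term{l}\sigma] = \term{s}$ as terms, and encoding both sides gives $\term{s}^\enc = (C[\term{l}\sigma])^\enc$, with the hole of $C$ at the position of $m(\rootof{\term{l}^\enc})$, as required.

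The main obstacle is the position-coherence claim, and within it the single observation doing all the work: in the flat lattice base labels are mutually incomparable and term encodings avoid $\top$ entirely, so the seemingly permissive ``labels may only increase'' condition degenerates into exact preservation on symbol vertices and on edges, pinning the image of each node to a forced position and label. Once that rigidity is in place the remaining construction of $\sigma$ and $C$ is routine bookkeeping, and the mono hypothesis is confirmed to be recoverable rather than separately needed.
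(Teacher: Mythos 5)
Your proposal is correct and follows essentially the same route as the paper's proof: the flat-lattice structure together with the absence of $\top$ in $\term{s}^\enc$ forces exact preservation of symbol-vertex and edge labels, from which $m$ maps the node at position $q$ of $\term{l}^\enc$ to position $pq$ of $\term{s}^\enc$, and then $C$ and $\sigma(x) = \restrict{\term{s}}{pq_x}$ are defined exactly as in the paper. The differences are only presentational: you make the position-coherence induction explicit and note that injectivity of $m$ is derivable rather than assumed, which elaborates, but does not depart from, the paper's argument.
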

\begin{proof}
    By monicity of $m$, the tree structure of $\term{l}^\enc$ is preserved into $\term{s}^\enc$. The labels of symbol vertices and edges are also preserved, since $\term{s}^\enc$ has no occurrences of $\top$. This also means that, for every vertex $v$ of $\term{l}^\enc$, $v$ and $m(v)$ have the same number of outgoing edges, since encodings preserve arities.

    A variable head $x \in V_{\term{l}^\enc}$ is mapped onto a vertex $m(x)$, which is either a variable head with label $\bot$, or a symbol vertex labeled with some $f \in \Sigma$ and a subtree underneath. 
    
    Let $p$ be the position of $m(\rootof{\term{l}^\enc})$ in $\term{s}^\enc$.
    Define $C$ as the context obtained from $\term{s}$ by replacing the subterm at position $p$ by $\hole$.
    Define the substitution $\sigma$, for every $x \in \var{\term{l}}$, by $\sigma(x) = \term{s}|_{pq_x}$ where $q_x$ is the position of $x$ in $\term{l}$.
    Then the claim follows since $m$ maps $x$ in $\term{l}^\enc$ to the position $pq_x$ in $\term{s}^\enc$, and the subtree rooted at this position is $(\term{s}|_{pq_x})^\enc$.
\end{proof}

\begin{lemma}[$(\cdot)^\enc$ Is Step-Preserving]
\label{lem:st:ssts}
Let $\rho = \term{l} \to \term{r}$ be a linear term rewrite rule. If $\term{s} \to \term{t}$ via $\rho$ at position $p$, then $\term{s}^\enc \to \term{t}^\enc$ via $\rho^\enc$ at position $p$.
\end{lemma}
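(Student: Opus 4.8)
The plan is to exhibit one concrete \pbpostrong step $\term{s}^\enc \Rightarrow_{\rho^\enc} \term{t}^\enc$ matched at position $p$; since the adherence and the two universal squares are then forced up to isomorphism (Lemma~\ref{lem:translation:determinism}), producing $\term{t}^\enc$ once suffices. As $\term{s} \to \term{t}$ via $\rho$ at $p$, write $\term{s} = C[\term{l}\sigma]$ and $\term{t} = C[\term{r}\sigma]$ with the hole of $C[\;]$ at position $p$. First I would build the match $m : \term{l}^\enc \mono \term{s}^\enc$ sending $\rootof{\term{l}^\enc}$ to the vertex at position $p$: because $\term{l}$ is linear and $\term{s}$ carries $\term{l}\sigma$ at $p$, the tree $\term{l}^\enc$ embeds injectively, each symbol vertex label-preservingly and each $\bot$-labeled variable head onto the root of the corresponding subterm $\sigma(x)^\enc$. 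This map is label-non-decreasing, hence a mono of \GraphLattice{(\labels,\leq)}; it is exactly the converse of the decomposition in Proposition~\ref{prop:mono:encoding:allows:decomposition}.

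With $m$ fixed, Proposition~\ref{prop:root:determines:adherence} furnishes the unique adherence $\alpha : \term{s}^\enc \to \contextclosure{\term{l}^\enc} = L'$ that makes the match square a pullback, so the strong-match condition holds automatically. The next step is to compute the extraction pullback $G_K$ of $\alpha$ along $l' : K' \mono L'$. Recalling that pullbacks in \GraphLattice{(\labels,\leq)} assign each vertex the meet of the two labels, I would verify componentwise that: the interior symbol vertices of the matched pattern have no $l'$-preimage and are deleted; the root at $p$ pairs with the $\bot$-root of $K'$ and so survives with label $\bot$ (its symbol meets $\bot$); the upper context of $\term{s}^\enc$, collapsed by $\alpha$ onto $\upcxtnode$, is recovered vertex- and edge-for-edge via the self-loop at $\upcxtnode \in K'$, keeping the incoming edge into $p$; for each $x \in \var{\term{r}}$ the whole subtree $\sigma(x)^\enc$ is recovered with labels intact (each meets $\top$); while for $x \in \var{\term{l}} \setminus \var{\term{r}}$ that subtree is deleted, and, since $K = \interface{\term{r}}$ is discrete, every edge out of the root $p$ is deleted too. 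Hence $G_K$ is the context $C$ carrying a single $\bot$-labeled, outgoing-edge-free root at $p$, together with the disconnected encoded subterms $\sigma(x)^\enc$ for $x \in \var{\term{r}}$.

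It then remains to compute the gluing pushout of $\term{r}^\enc = R \xleftarrow{r} K = \interface{\term{r}} \xrightarrow{u} G_K$. The two legs of $K$ identify $\rootof{\term{r}^\enc}$ with the root at $p$ and each variable head $x$ of $\term{r}^\enc$ with the root of the extracted $\sigma(x)^\enc$; taking joins of labels then lifts the $p$-root from $\bot$ to the head symbol of $\term{r}$ and each glued head from $\bot$ to the head symbol of $\sigma(x)$. The effect is to reattach $\term{r}^\enc$ at $p$ with every variable head $x$ replaced by $\sigma(x)^\enc$ inside the unchanged context $C$, which is precisely $(C[\term{r}\sigma])^\enc = \term{t}^\enc$.

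I expect the extraction pullback to be the main obstacle: one must argue with care that the strong-match adherence $\alpha$, which deliberately collapses both the upper context and the substitution subtrees, is correctly inverted by pulling back along $l'$ — that the self-loops at $\upcxtnode$ and at the $x'$ nodes reinstate all internal edges, that the meet computation produces exactly the intended labels ($\bot$ at the root, unchanged everywhere else), and that precisely the right vertices and edges (the matched symbol vertices, the edges out of $p$, and the subtrees of deleted variables) disappear. Given this, the pushout step and the final identification $G_R \iso \term{t}^\enc$ reduce to routine label-join bookkeeping.
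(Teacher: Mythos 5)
Your proposal is correct and follows essentially the same route as the paper's proof: decompose $\term{s} = C[\term{l}\sigma]$, $\term{t} = C[\term{r}\sigma]$, obtain $m$ and $\alpha$ via Proposition~\ref{prop:root:determines:adherence}, and then verify that the pullback along $l'$ extracts the context together with the subtrees $\sigma(x)^\enc$ for $x \in \var{\term{r}}$ and that the pushout reassembles them around $\term{r}^\enc$ to give $(C[\term{r}\sigma])^\enc$. The only difference is one of detail: where the paper declares the pullback/pushout verification ``straightforward to check,'' you carry out the componentwise meet/join bookkeeping explicitly (and your invocation of Lemma~\ref{lem:translation:determinism} at the outset is harmless but unnecessary, since exhibiting a single step to $\term{t}^\enc$ is all the lemma demands).
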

\begin{proof}
    By the definition of a term rewrite step,
    $\term{s} = C[\term{l}\sigma]$ and $\term{t} = C[\term{r}\sigma]$ for some context $C$ and substitution $\sigma$, and $\term{l}\sigma$ is at position $p$ in $C[\term{l}\sigma]$.
    
    By the definitions of encodings and a \pbpostrong rewrite step, we must show that the diagram
    \begin{equation}
    \label{eq:diagram:step:preserving}
      \begin{tikzpicture}[node distance=20mm, l/.style={inner sep=1mm},baseline=-7.5mm,v/.style={node distance=13mm}]
        \node (G) {$(C[\term{l}\sigma])^\enc$};
        \node (L) [left of=G] {$\term{l}^\enc$};
          \draw [>->] (L) to node [above,l] {$m$} (G);
        \node (L2) [v,below of=L] {$\term{l}^\enc$};
          \draw [>->] (L) to node [left,l] {$1_{\term{l}^\enc}$} (L2);
          \node at ([shift={(4mm,-4mm)}]L) {PB};
        \node (LP) [v,below of=G] {$\contextclosure{\term{l}^\enc}$};
          \draw [>->] (L2) to node [above,l] {$t_L$} (LP);
          \draw [->] (G) to node [left,l] {$\alpha$} (LP);
         \node (GKP) [right= of G] {$G_K$};
         \draw [>->] (GKP) to node [above,l] {$g_L$} (G);
         \node (KP) [v,below of=GKP] {$\contextclosure{\interface{\term{r}}}$};
         \draw [->] (GKP) to node [right,l] {$u'$} (KP);
         \draw [>->] (KP) to node [above,l] {$l'$} (LP);
         \node at ([shift={(-4mm,-4mm)}]GKP) {PB};
         \node (K) [v,above of=GKP] {$\interface{\term{r}}$};
         \draw [>->, dotted] (K) to node [right,l] {$!u$} (GKP);
         \node (R) [right of=K] {$\term{r}^\enc$};
         \draw [->] (K) to node [above,l] {$r$} (R);
         \node (GP) [v,below of=R] {$(C[\term{r}\sigma])^\enc$};
         \draw [->] (GKP) to node [below,l] {$g_R$} (GP);
         \draw [>->] (R) to node [right,l] {$w$} (GP);
         \node at ([shift={(-4mm,4mm)}]GP) {PO};
      \end{tikzpicture}
    \end{equation}
    holds for some $G_K$ and the various morphisms that are not fixed by $\rho^\enc$ (including $\alpha$), and where $m$ maps $\rootof{\term{l}^\enc}$ onto position $p$ of $(C[\term{l}\sigma])^\enc$. Note that $g_L$ is a mono by Proposition~\ref{prop:translation:well:defined:and:monicity} and stability of monos under pullbacks, and $w$ is a mono by Proposition~\ref{prop:mono:stable:pushout}.
    
    By Proposition~\ref{prop:root:determines:adherence}, $m$ and $\alpha$ exist and they exist uniquely. It is then straightforward to check that the middle pullback extracts the subgraphs corresponding to the context $C$ and to every subterm bound to a variable $x \in \var{\term{l}} \cap \var{\term{r}}$, and that the pushout performs the appropriate gluing around pattern $\term{r}^\enc$, with $(C[\term{r}\sigma])^\enc$ as the result.
\end{proof}

\begin{lemma}[$(\cdot)^\enc$ Is Closed]
\label{lem:encoding:is:closed}
Let $\rho = \term{l} \to \term{r}$ be a linear term rewrite rule. If $\term{s}^\enc \to G$ via $\rho^\enc$ then $G \iso \term{t}^\enc$ for some term $\term{t}$ with $\term{s} \to \term{t}$.
\end{lemma}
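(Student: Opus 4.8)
The plan is to reduce closure to the already-established step-preserving direction (Lemma~\ref{lem:st:ssts}) together with match determinism (Lemma~\ref{lem:translation:determinism}). First I would unfold the given \pbpostrong step $\term{s}^\enc \to G$ via $\rho^\enc$: its match square supplies a \emph{monic} match $m : \term{l}^\enc \mono \term{s}^\enc$. Applying Proposition~\ref{prop:mono:encoding:allows:decomposition} to this mono yields a context $C$ and substitution $\sigma$ with $\term{s}^\enc = (C[\term{l}\sigma])^\enc$; decoding gives $\term{s} = C[\term{l}\sigma]$. Let $p$ be the position of $m(\rootof{\term{l}^\enc})$ in $\term{s}^\enc$, which the proposition identifies with the position of the hole in $C[\;]$.

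Next I would set $\term{t} = C[\term{r}\sigma]$. By the definition of the TRS rewrite relation this gives $\term{s} = C[\term{l}\sigma] \to C[\term{r}\sigma] = \term{t}$ via $\rho$ at position $p$, so the term $\term{t}$ with $\term{s} \to \term{t}$ demanded by the statement is already in hand. It then remains only to prove $G \iso \term{t}^\enc$.

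For this I would argue that both $G$ and $\term{t}^\enc$ arise from applying $\rho^\enc$ at the \emph{same} position $p$ in $\term{s}^\enc$, and then invoke determinism. On one side, the given step is, by the construction of $p$, an application of $\rho^\enc$ at position $p$: the match sends $\rootof{\term{l}^\enc}$ onto the vertex at position $p$, and by Proposition~\ref{prop:root:determines:adherence} the accompanying adherence $\alpha$ is the unique morphism establishing the strong match, so all matching data are fixed by $p$. On the other side, Lemma~\ref{lem:st:ssts} applied to the term step $\term{s} \to \term{t}$ produces a \pbpostrong step $\term{s}^\enc \to \term{t}^\enc$ via $\rho^\enc$ that is likewise an application at position $p$. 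Since Lemma~\ref{lem:translation:determinism} guarantees that applying $\rho^\enc$ at a fixed position determines the outcome up to isomorphism, we conclude $G \iso \term{t}^\enc$, as required.

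The main obstacle I anticipate is the bookkeeping in the first paragraph: confirming that the monic match of an \emph{arbitrary} step really does license the decomposition $\term{s}^\enc = (C[\term{l}\sigma])^\enc$ and that the position read off from $m$ coincides with the position at which the term rule fires. Once this alignment is secured, the remainder is a clean appeal to uniqueness (Proposition~\ref{prop:root:determines:adherence} and Lemma~\ref{lem:translation:determinism}); the conceptual heart is simply that closure follows from decomposability of monic matches plus determinism of the encoded step.
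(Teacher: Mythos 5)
Your proposal is correct and follows essentially the same route as the paper's proof: decompose $\term{s}^\enc = (C[\term{l}\sigma])^\enc$ via Proposition~\ref{prop:mono:encoding:allows:decomposition}, obtain the term step $\term{s} \to C[\term{r}\sigma] = \term{t}$ at position $p$, lift it with Lemma~\ref{lem:st:ssts}, and conclude $G \iso \term{t}^\enc$ by Lemma~\ref{lem:translation:determinism}. Your extra appeal to Proposition~\ref{prop:root:determines:adherence} to align the two applications at position $p$ is just making explicit what the paper's determinism lemma already encapsulates.
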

\begin{proof}
    Assume $\term{s}^\enc \to G$ via $\rho^\enc $ at position $p$.
    Then by Proposition~\ref{prop:mono:encoding:allows:decomposition} we have $\term{s}^\enc = (C[\term{l}\sigma])^\enc$ for some context $C$ and substitution $\sigma$ such that $\term{s}(p) = \hole$.
    Then $\term{s} = C[\term{l}\sigma] \to C[\term{r}\sigma] = \term{t}$ via $\rho$ at position $p$.
    Thus $\term{s}^\enc \to \term{t}^\enc$ via $\rho^\enc$ at position $p$ by Lemma~\ref{lem:st:ssts}.
    Then we have $G \iso \term{t}^\enc$ by Lemma~\ref{lem:translation:determinism}.
\end{proof}

\begin{theorem}
\label{thm:encoding:is:embedding}
The encoding $(\cdot)^\enc$ is an embedding.
\end{theorem}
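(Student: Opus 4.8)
The plan is to observe that, by Definition~\ref{def:encoding}, an encoding qualifies as an embedding exactly when it is both step-preserving and closed, and that both properties have essentially been secured already---at the level of individual rules---by Lemmas~\ref{lem:st:ssts} and~\ref{lem:encoding:is:closed}. The proof therefore reduces to lifting these rule-level statements to whole rewrite systems and reconciling them with the isomorphism conditions built into the definition of closedness.

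For step-preservation, suppose $\term{s} \to_{R} \term{t}$ in the source framework. By definition of the TRS rewrite relation this is an application of some rule $\rho \in R$, so $\term{s} \to_\rho \term{t}$; since $R^\enc = \{\, \rho^\enc \mid \rho \in R \,\}$, Lemma~\ref{lem:st:ssts} gives $\term{s}^\enc \to_{\rho^\enc} \term{t}^\enc$ and hence $\term{s}^\enc \to_{R^\enc} \term{t}^\enc$, which is precisely the required implication $\encoding(\term{s}) \to_{\encoding(R)} \encoding(\term{t})$.

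For closedness, suppose $x \to_{R^\enc} y$ with $x \iso \term{s}^\enc$ for some term $\term{s}$. First I would transport the step along the isomorphism: because a \pbpostrong step (Definition~\ref{def:pbpostrong:rewrite:step}) is formulated purely by a commuting diagram of pullback and pushout squares, it is stable under pre- and post-composition with isomorphisms of the host graph, so $x \to_{R^\enc} y$ induces a step $\term{s}^\enc \to_{R^\enc} y'$ with $y' \iso y$. This step applies some $\rho^\enc$ with $\rho \in R$, whence Lemma~\ref{lem:encoding:is:closed} supplies a term $\term{t}$ with $\term{s} \to_\rho \term{t}$ and $y' \iso \term{t}^\enc$. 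Chaining the two isomorphisms yields $y \iso \term{t}^\enc = \encoding(\term{t})$ with $\term{s} \to_R \term{t}$, which is the closedness condition. I expect no genuine obstacle here: the theorem is a direct assembly of the two preceding lemmas, and the only step demanding a moment's care---the isomorphism transport---is immediate from the invariance of the rewrite diagram under isomorphism.
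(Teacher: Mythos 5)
Your proposal is correct and follows essentially the same route as the paper, whose proof is simply the one-line assembly of Lemma~\ref{lem:st:ssts} (step-preservation) and Lemma~\ref{lem:encoding:is:closed} (closedness); your extra care in lifting the rule-level lemmas to the system level and transporting steps along isomorphisms just makes explicit what the paper leaves implicit (rewriting being defined modulo isomorphism).
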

\begin{proof}
    From Lemma~\ref{lem:st:ssts} and Lemma~\ref{lem:encoding:is:closed}.
\end{proof}

\section{The Embedding Preserves Termination Globally}
\label{sec:termination}

From the fact that the encoding is step-preserving (Lemma~\ref{lem:st:ssts}), the following is almost immediate.

\store{lemma}{}{lemma:terminating:on:graphs:terminating:on:terms}{
Let $R$ be a linear TRS.
    If $R^\enc$ is terminating on {\FinGraphLattice{\Sigma^\enc}}, then $R$ is terminating. \qed
}

It is obvious that the other direction holds if the category \FinGraphLattice{\Sigma^\enc} is restricted to graphs that are term encodings; so we have \emph{local termination}~\cite{termination:local:2009,termination:local:2010,termination:automata:2015}. However, in this subsection we will show that the direction holds \emph{globally}. Thus, in particular, the finite graphs may be disconnected, cyclic, and labeled arbitrarily from $\Sigma^\enc$.

Our overall proof strategy is as follows. First, we show that it suffices to restrict to cycle-free graphs $G$ (Corollary~\ref{corollary:terminating:iff:cycle-free:terminating}). Then, we show that an infinite rewrite sequence on cycle-free $G$ contains (in some sense) an infinite rewrite sequence on term encodings, and therefore on terms~(Theorem~\ref{thm:terminating:terms:iff:terminating:graphs}).

\newcommand{\dropcycles}[1]{[#1]}

\begin{definition}[Undirected Path]
  Let $n \in \nat$. An \emph{undirected path} of length~$n$ from node $v_1$ to $v_{n+1}$ in a graph $G$ is a sequence $v_1 \, e_1 \, v_2 \, v_2 \cdots v_n \, e_n \, v_{n+1}$ where
  $v_1,v_2,\ldots,v_{n+1}$ are nodes of $G$ and
  $e_1,e_2,e_3,\ldots,e_n$ are edges of $G$ such that
  \(
  (v_i,v_{i+1}) \in \{\, (s(e_i),t(e_i)),\; (t(e_i),s(e_i)) \,\}  \text{ for every $1 \le i \le n$.}
  \)
  
  The path is an \emph{undirected cycle} if moreover $n > 0$, $v_1 = v_{n+1}$ and $e_i \ne e_j$ for all $0 < i < j \leq n$.
  A \emph{cycle edge} (\emph{cycle node}) is an edge (node) that is part of an undirected cycle. A graph is \emph{cycle-free} if it does not contain undirected cycles.
\end{definition}

\begin{example}
  A path of length 1 is an undirected cycle iff its only edge $e$ is a loop, that is, $s(e) = t(e)$.
  Two edges between two nodes always constitute an undirected cycle of length 2 (irrespective of the direction of the edges).
\end{example}

\store{proposition}{}{lem:cycle:two:paths}{
    Edge $e$ is a cycle edge iff there exists an undirected path from $s(e)$ to $t(e)$ that does not include $e$.
}
\begin{proof}
    If $s(e) = t(e)$, one path is the empty path. Obvious otherwise.
\end{proof}

\begin{proposition}\label{prop:cycle:edge:mono}
    If $e$ is a cycle edge in $G$ and $\phi: G \mono H$ a mono, then $\phi(e)$ is a cycle edge in $H$. \qed
\end{proposition}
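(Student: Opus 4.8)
The plan is to reduce the claim to the path characterization of cycle edges supplied by Proposition~\ref{lem:cycle:two:paths}, and then to transport the witnessing path across $\phi$. First I would invoke Proposition~\ref{lem:cycle:two:paths} in the forward direction: since $e$ is a cycle edge in $G$, there exists an undirected path $\sigma \colon v_1 \, e_1 \, v_2 \cdots v_n \, e_n \, v_{n+1}$ from $s(e)$ to $t(e)$ in $G$ that does not include $e$. The goal is then to produce an analogous path in $H$ from $s(\phi(e))$ to $t(\phi(e))$ that avoids $\phi(e)$, so that the proposition applies in reverse to conclude that $\phi(e)$ is a cycle edge.

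Next I would verify that the pointwise image $\phi(\sigma) \colon \phi(v_1) \, \phi(e_1) \, \phi(v_2) \cdots \phi(e_n) \, \phi(v_{n+1})$ is again an undirected path, this time in $H$. This is where the defining compatibility of a premorphism, $(s_H, t_H) \circ \phi_E = \phi_V \circ (s_G, t_G)$, is used: each incidence $(v_i, v_{i+1}) \in \{\, (s(e_i), t(e_i)),\, (t(e_i), s(e_i)) \,\}$ is carried by $\phi_V$ to $\{\, (s(\phi(e_i)), t(\phi(e_i))),\, (t(\phi(e_i)), s(\phi(e_i))) \,\}$, so the path condition is preserved edge by edge. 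Applying the same compatibility to $e$ itself gives $s(\phi(e)) = \phi(s(e)) = \phi(v_1)$ and $t(\phi(e)) = \phi(t(e)) = \phi(v_{n+1})$, so $\phi(\sigma)$ indeed runs from $s(\phi(e))$ to $t(\phi(e))$.

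The one substantive point — and the step I expect to be the crux — is showing that $\phi(\sigma)$ does not include $\phi(e)$. Here the hypothesis that $\phi$ is monic is essential: a non-injective morphism could collapse some $e_i$ onto $\phi(e)$ and thereby destroy the cycle-edge property. Concretely, monomorphisms in \GraphLattice{(\labels,\leq)} are injective on edges, so if $\phi(e_i) = \phi(e)$ for some $i$, then $e_i = e$, contradicting the fact that $\sigma$ avoids $e$. Hence no edge of $\phi(\sigma)$ equals $\phi(e)$.

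Having exhibited an undirected path from $s(\phi(e))$ to $t(\phi(e))$ in $H$ avoiding $\phi(e)$, I would close the argument by applying Proposition~\ref{lem:cycle:two:paths} in the reverse direction, which yields that $\phi(e)$ is a cycle edge in $H$, as required. The only subtlety worth flagging is that the path characterization refers to arbitrary undirected paths (with no distinctness requirement on edges), so the image path need not itself be a cycle; it merely needs to connect $s(\phi(e))$ to $t(\phi(e))$ without reusing $\phi(e)$, which is exactly what the monicity argument delivers.
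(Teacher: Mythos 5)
Your proof is correct. Note that the paper offers no proof of this proposition at all---it is stated as immediate, with the proof omitted---so there is no official argument to diverge from; what you have written supplies exactly the details the authors leave implicit, using the same toolkit the paper deploys right next door in Lemma~\ref{lem:cycle:edges}: the characterization of cycle edges from Proposition~\ref{lem:cycle:two:paths}, preservation of undirected paths under premorphisms, and injectivity of monos on edges (a fact the paper also uses without proof, e.g.\ in the proof of Lemma~\ref{lem:cycle:edges}). One small remark: you could bypass Proposition~\ref{lem:cycle:two:paths} entirely by pushing the witnessing cycle forward directly---if $e$ lies on an undirected cycle $v_1\,e_1\,v_2\cdots v_n\,e_n\,v_{n+1}$, then its $\phi$-image is again an undirected cycle through $\phi(e)$, since the premorphism condition preserves all incidences and the closing condition $v_1=v_{n+1}$, while injectivity of $\phi_E$ preserves the pairwise distinctness of the edges $e_1,\dots,e_n$. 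Either way, the substantive ingredient is the one you correctly isolated: monicity is precisely what prevents the witnessing structure from collapsing onto $\phi(e)$, and without it the statement fails (as the paper notes for general morphisms).
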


Although monos preserve the cycle edge property, morphisms do not generally do so (consider a morphism that identifies two parallel edges). However, for adherence morphisms $\alpha$ we have the following result.

\store{lemma}{}{lem:cycle:edges}{
  Consider the \pbpostrong{} match square (the leftmost square of the rewrite step diagram) with a host graph $G_L$.
  Suppose that $e$ is a cycle edge in $G_L$ and $\alpha(e) = t_L(e')$ for some $e' \in E_L$. Then $\alpha(e)$ is a cycle edge in $L'$.
}
\begin{proof}
    Let $\sigma_1$ be the path just consisting of $e$.
    By Proposition~\ref{lem:cycle:two:paths} there also exists an undirected path from $s(e)$ to $t(e)$ in $G_L$ that does not include $e$.
    Since premorphisms preserve undirected paths, $\alpha(\sigma_1)$ and $\alpha(\sigma_2)$ are undirected paths from $\alpha(s(e))$ to $\alpha(t(e))$ in $L'$. If $\alpha(e)$ is not a cycle edge, then paths $\alpha(\sigma_1)$ and $\alpha(\sigma_2)$ both include $\alpha(e)$ by Proposition~\ref{lem:cycle:two:paths}. Thus $\alpha$ maps two distinct edges in $G_L$ onto $\alpha(e) = t_L(e')$. Since $L$ is the $\alpha$-preimage of $t_L(L)$, $\alpha \circ m = t_L$ also maps two distinct edges onto $t_L(e)$. This contradicts that $t_L$ is monic. So $\alpha(e)$ is a cycle edge.
\end{proof}

\store{lemma}{Cycle-Preserving Pullback}{lem:cycle:preserving:pullback}{
    If for $\tau = G \xrightarrow{g} X \xleftarrow{h} H$, (i)~$\sigma$ is an undirected cycle in $G$, (ii)~$g(\sigma)$ lies in the image of $h$, and (iii)~the pullback for $\tau$ is $G \xleftarrow{g'} Y \xrightarrow{h'} H$, then every edge $e \in g'^{-1}(\sigma)$ is a cycle edge in $Y$. \qed
}

\begin{definition}[Cycle Edge Removal]
  For a graph $G$, we let $\dropcycles{G}$ denote the graph obtained by deleting all cycle edges from $G$.
\end{definition}

\store{lemma}{}{lem:drop:cycles}{
    Let $\rho: \term{l} \to \term{r}$ be a linear term rewrite rule over $\Sigma$.
    If there is a rewrite step $G_L \stackrel{\rho^\enc}{\to} G_R$
    on graphs over~$\Sigma^\enc$,
    then also $\dropcycles{G_L} \stackrel{\rho^\enc}{\to} \dropcycles{G_R}$.
}

\begin{proof}
    By the definition of a rewrite step and substituting for the translation of $\rho$, we have the following arrangement of objects and morphisms
    \begin{center}
      \begin{tikzpicture}[node distance=24mm and 16mm, l/.style={inner sep=1mm},baseline=-7.5mm,v/.style={node distance=13mm}]
        \node (G) {$G_L$};
        \node (L) [left of=G] {$\term{l}^\enc$};
          \draw [>->] (L) to node [above,l] {$m$} (G);
        \node (L2) [v,below of=L] {$\term{l}^\enc$};
          \draw [>->] (L) to node [left,l] {$1_{\term{l}^\enc}$} (L2);
          \node at ([shift={(4mm,-3mm)}]L) {PB};
        \node (LP) [v,below of=G] {$\contextclosure{\term{l}^\enc}$};
          \draw [>->] (L2) to node [above,l] {$t_L$} (LP);
          \draw [->] (G) to node [left,l] {$\alpha$} (LP);
         \node (GKP) [right of=G] {$G_K$};
         \draw [>->] (GKP) to node [above,l] {$g_L$} (G);
         \node (KP) [v,below of=GKP] {$\contextclosure{\interface{\term{r}}}$};
         \draw [->] (GKP) to node [right,l] {$u'$} (KP);
         \draw [>->] (KP) to node [above,l] {$l'$} (LP);
         \node at ([shift={(-4mm,-3mm)}]GKP) {PB};
         \node (K) [v,above of=GKP] {$\interface{\term{r}}$};
         \draw [>->, dotted] (K) to node [right,l] {$!u$} (GKP);
         \node (R) [right of=K] {$\term{r}^\enc$};
         \draw [->] (K) to node [above,l] {$r$} (R);
         \node (GP) [v,below of=R] {$G_R$};
         \draw [->] (GKP) to node [below,l] {$g_R$} (GP);
         \draw [>->] (R) to node [right,l] {$w$} (GP);
         \node at ([shift={(-4mm,3mm)}]GP) {PO};
      \end{tikzpicture}
    \end{center}
    for some $G_K$. Many of the morphisms are fixed by the rule $\rho^\enc$. Note that $g_L$ is a mono by Proposition~\ref{prop:translation:well:defined:and:monicity} and stability of monos under pullbacks.
    
    Observe that $t_L(\term{l}^\enc)$ does not contain cycle edges~(Definition~\ref{def:translate:trs:rule}). Hence by Lemma~\ref{lem:cycle:edges}, $\alpha$ must map every cycle edge of $G_L$ into one of the edges created by constructing the context closure $\contextclosure{\term{l}^\enc}$ of $\term{l}^\enc$.

    Now suppose that we replace $G_L$ by $\dropcycles{G_L}$ in the diagram. Then the middle pullback object $G'_K$ is obtained by removing from $G_K$ the set of edges $C \subseteq E_{G_K}$ that mono $g_L$ maps into a cycle edge of $G_L$. Since monos preserve cycle edges, every cycle edge of $G_K$ is in $C$. Moreover, using Lemma~\ref{lem:cycle:preserving:pullback}, $C$ contains only cycle edges. Hence $G'_K = \dropcycles{G_K}$.
    
    Similarly, the pushout object replacement $G'_R$ for $G_R$ is obtained by removing from $G_R$ the set of edges $C \subseteq E_{G_R}$ that have a cycle edge $g_R$-preimage in $G_K$. Since an undirected path $\rho$ in $G_R$ is an undirected cycle iff $\rho$ is in the range of $g_R$ and $g_R^{-1}(\rho)$ is an undirected cycle,
    $G'_R = \dropcycles{G_R}$.
\end{proof}

As a direct consequence of Lemma~\ref{lem:drop:cycles} we obtain the following.

\store{corollary}{}{corollary:terminating:iff:cycle-free:terminating}{
  Let $R$ be a linear TRS over~$\Sigma$. $R^\enc$ admits an infinite rewrite sequence on all graphs iff $R^\enc$ admits an infinite rewrite sequence on cycle-free graphs. \qed
}

Thus, in order to prove that termination of $R$ implies termination of $R^\enc$ in \FinGraphLattice{\Sigma^\enc}, it suffices to restrict attention to finite, cycle-free graphs. However, not all such graphs are term-like: graphs may be arbitrarily labeled from $\Sigma^\enc$, non-rooted and disconnected. So a further argument is needed.

\newcommand{\wi}{\mathrm{I}}
\newcommand{\wo}{\mathrm{O}}
\newcommand{\wf}{\wi \wedge \wo}
\newcommand{\iif}{\neg\wi \wedge \wo}
\newcommand{\oif}{\wi \wedge \neg\wo}
\newcommand{\fif}{\neg\wi \wedge \neg\wo}
\newcommand{\alt}{\mathrm{A}}
\newcommand{\botalt}{\bot\text{-}\alt}

\begin{definition}[Well-Formedness]\label{def:node:classification}
    Let $\Sigma$ be a signature, and $G$ a graph with labels from $\Sigma^\enc$.
    A node $v \in V_G$ with label $l \in \Sigma \cup \{ \bot, \top \} \cup \mathbb{N}^+$ is \emph{in-well-formed}~($\wi$) if it has at most one incoming edge; and it is \emph{out-well-formed}~($\wo$) if $l \in \Sigma$, and $v$ has precisely $\#{l}$ outgoing edges, labeled with $1$, $2$, \ldots, $\#{l}$.
\end{definition}

\begin{definition}[Good and Bad Nodes]
    A node $v \in V_G$ is called \emph{good} if $v$ is $\wo$ and all of $v$'s children in $G$ are $\wi$. Nodes that are not good are \emph{bad}.
\end{definition}

We will use the distinction between good and bad nodes to define a kind of partitioning on graphs $G$, which we call a \emph{zoning}. For cycle-free graphs, each zone will be seen to correspond to a term encoding in a qualified sense. (Some edges of $G$ will not be part of any zone of $G$.) Since most results related to zoning hold not only for non-cycle-free graphs, we will use minimal assumptions where possible (in particular, note that (directed) acyclicity is a weaker condition than cycle-freeness). We do assume finiteness globally.

\newcommand{\partitioning}{zoning\xspace}
\newcommand{\partition}{zone\xspace}
\newcommand{\partitions}{zones\xspace}

\begin{definition}[Zoning]
    A \emph{\partitioning} of $G$ divides up $G$ into \emph{\partitions}, which are subgraphs of $G$. The zoning is iteratively constructed as follows:
    \begin{itemize}
        \item Initially, every node of $G$ forms its own \partition.
        \item At each subsequent iteration, if an edge $e$ is not included in a zone and $s(e)$ is good, join the zones of $s(e)$ and $t(e)$ along $e$. (If $s(e)$ and $t(e)$ are in the same zone $Z$, this is the same as adding $e$ to zone $Z$.)
        \item The algorithm terminates if the previous step can no longer be applied.
    \end{itemize}
\end{definition}

\begin{definition}[Bridge]
    A \emph{bridge} is an edge $e \in E_G$ not included in any zone of $G$.
\end{definition}

\begin{proposition}
    The \partitioning of a graph $G$ is unique, and any zone is a connected subgraph. \qed
\end{proposition}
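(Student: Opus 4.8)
The plan is to exploit the crucial observation that \emph{goodness is a static property}: whether a node is good depends only on its label and on the edges of $G$ (through $\wo$ and the $\wi$-status of its children in $G$), and never on the zoning constructed so far. Consequently the guard ``$s(e)$ is good'' in the construction refers to a fixed set of edges $E^g = \{\, e \in E_G \mid s(e) \text{ is good} \,\}$, which is oblivious to the state of the algorithm. Almost everything follows from making this precise.

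First I would establish termination together with a characterization of the final state. Each iteration places into a zone an edge $e$ that was previously in no zone, and this edge then remains in a zone forever; hence the number of edges lying in zones strictly increases, and by finiteness of $G$ (assumed globally) the construction halts. Upon halting the guard can no longer fire, so every edge of $E^g$ lies in some zone; conversely every edge placed in a zone was added by a merge and therefore has a good source, i.e.\ lies in $E^g$. Thus, independently of the run, the edges occurring in zones are exactly $E^g$, and the bridges are exactly $E_G \setminus E^g$.

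For connectedness I would argue by the invariant that every zone is connected at every stage. Singletons are connected; a merge joins the (by hypothesis connected) zones of $s(e)$ and $t(e)$ along the edge $e$, which keeps the result connected, and if the two endpoints already share a zone then adding $e$ preserves connectedness trivially. By induction, every zone of the final zoning is a connected subgraph.

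Finally, for uniqueness I would reduce the vertex partition to the connected components of the static graph $G^g = (V_G, E^g)$ under undirected reachability. On the one hand, each merge identifies $s(e)$ with $t(e)$ for some $e \in E^g$, so any two vertices sharing a zone are linked by an undirected path in $G^g$. On the other hand, since every edge of $E^g$ is eventually added (by the characterization above), any undirected path in $G^g$ forces its endpoints into a common zone. Hence two vertices lie in the same zone \emph{iff} they lie in the same connected component of $G^g$, a relation manifestly independent of the processing order; combined with the fact that the zone-edge set is the fixed set $E^g$, this pins down the zoning uniquely. The main obstacle is exactly this order-independence: the temptation is to fear that different orders yield different zones, and the point to make rigorous is that the merge guard does not consult the current zoning, so the construction computes nothing more than the equivalence relation generated by the pairs $\{(s(e),t(e)) \mid e \in E^g\}$, which is confluent by the standard union--find argument.
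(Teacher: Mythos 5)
Your proof is correct. The paper offers no proof of this proposition at all---it is stated as immediate, with the proof omitted---so there is nothing to compare against; your argument supplies exactly the justification the authors leave implicit. The idea that does the work is the one you lead with: goodness is a \emph{static} property of $G$, because in-well-formedness and out-well-formedness refer only to the labels and edges of $G$ itself, never to the zoning built so far. Hence the set $E^g$ of edges with good source is fixed in advance, the algorithm terminates with precisely the edges of $E^g$ placed into zones (each edge lying in the unique zone containing its two endpoints), and the resulting vertex partition is the set of connected components of $(V_G, E^g)$, which is manifestly order-independent; connectedness of each zone then follows from your merge invariant. A side benefit of your write-up is that the characterization of bridges as exactly $E_G \setminus E^g$ is what the paper's Proposition~\ref{prop:on:bridges} tacitly relies on (``if $e$ is a bridge, $s(e)$ must be bad'' holds only because, at termination, every edge with a good source has been absorbed into a zone).
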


\begin{proposition}
\label{prop:nodes:are:I}
  If $e \in E_Z$ is included in zone $Z$, then $t(e)$ is $\wi$.
\end{proposition}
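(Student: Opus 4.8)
The plan is to unfold the zoning construction together with the definition of a good node; the statement then follows almost immediately. The key observation is that in the initial configuration each zone consists of a single node and contains \emph{no} edges, so any edge $e$ that ends up in some zone $Z$ must have been inserted during a joining step. By the rule governing that step, an edge is added to a zone only when its source is good.

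Concretely, I would fix an edge $e \in E_Z$ and consider the iteration at which $e$ was first placed into a zone. By the construction of the zoning, this occurs precisely when $s(e)$ is good, at which point the zones of $s(e)$ and $t(e)$ are joined along $e$; no other mechanism introduces edges into zones, so this case is exhaustive. Next I would invoke the definition of a good node: since $s(e)$ is good, it is $\wo$ and, crucially, all of its children in $G$ are $\wi$. Because $e$ is an edge whose source is $s(e)$, its target $t(e)$ is by definition a child of $s(e)$. Hence $t(e)$ is $\wi$, which is exactly the claim.

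I do not anticipate a genuine obstacle here, as the result is a direct consequence of the definitions. The only point requiring a moment's care is the observation that edges enter zones \emph{solely} through the joining step, so that one may always appeal to the goodness of the source to obtain the required property of the target. Note also that no assumptions beyond those already in force are used; in particular, neither acyclicity nor cycle-freeness of $G$ is needed for this proposition.
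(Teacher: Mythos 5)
Your proof is correct and follows exactly the paper's argument: the paper's one-line proof (``since $e$ was joined along, $s(e)$ is good, and hence $t(e)$ is $\wi$'') is precisely your observation that edges enter zones only through joining steps, which require the source to be good, and goodness of $s(e)$ makes its child $t(e)$ in-well-formed. Your version merely spells out the same steps in more detail, including the correct remark that no acyclicity assumption is needed.
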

\begin{proof}
 Since $e$ was joined along, $s(e)$ is good, and hence $t(e)$ is $\wi$.
\end{proof}

\begin{definition}[Root]
A node $v \in V_Z$ without a parent inside zone $Z$ is called a \emph{root} for $Z$.
\end{definition}

\begin{proposition}
\label{prop:ancestor:in:zone}
Within a zone $Z$, for any two nodes $u,v \in V_Z$, there is a node $x \in V_Z$ such that $u \leftarrow^* x \to^* v$ (using edges included in $Z$).
\end{proposition}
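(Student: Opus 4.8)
The plan is to exploit the fact that every zone is essentially an in-forest: within a zone $Z$, each node has at most one incoming zone edge. Indeed, by Proposition~\ref{prop:nodes:are:I}, every target of an edge included in $Z$ is $\wi$, and hence has at most one incoming edge in all of $G$, a fortiori at most one among the edges of $Z$. Thus every $w \in V_Z$ has a \emph{unique} parent in $Z$ whenever it has a parent at all, and the nodes with no parent are exactly the roots of $Z$. This single-parent property is the structural core of the argument.

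With this in hand, I would prove the statement by induction on the length of an undirected path connecting $u$ and $v$ inside $Z$; such a path exists because each zone is a connected subgraph (as already established above). For the base case $u = v$, the node $x = u$ works. For the inductive step, write the path as $u - \cdots - w - v$, where the final edge joins $w$ and $v$, and apply the induction hypothesis to the shorter path from $u$ to $w$ to obtain a common ancestor $x$ with $x \to^* u$ and $x \to^* w$ using edges of $Z$. It then remains to promote $x$ (or $v$) to a common ancestor of $u$ and $v$ by splitting on the orientation of the last edge.

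If the last edge is directed $w \to v$, then appending it yields $x \to^* w \to v$, so $x$ is already a common ancestor of $u$ and $v$. The interesting case is the backward edge $v \to w$, and here the single-parent property does the real work: since $w$ has at most one incoming edge in $Z$ and $v \to w$ is such an edge, it is the unique parent edge of $w$. Consequently any directed path $x \to^* w$ in $Z$ with $x \neq w$ must enter $w$ through this edge and so factors as $x \to^* v \to w$, giving $x \to^* v$; and if instead $x = w$, then $v \to x \to^* u$ shows that $v$ itself is a common ancestor. Either way one obtains a common ancestor of $u$ and $v$, completing the induction.

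The main obstacle is precisely the backward-edge case: a priori, a common ancestor of $u$ and $w$ need not be an ancestor of $v$ when the connecting edge points from $v$ into $w$. The proof hinges on observing that the in-degree bound forces every ancestor of $w$ other than $w$ itself to pass through $v$, so no genuinely new case arises. I expect no difficulty from cycles: the argument never assumes directed acyclicity and uses only connectedness together with the in-degree-at-most-one property, so it applies whether or not $Z$ contains a directed cycle.
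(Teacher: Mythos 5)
Your proof is correct, and it rests on the same two ingredients as the paper's---connectedness of zones and Proposition~\ref{prop:nodes:are:I}---but it is organized genuinely differently. The paper argues in one shot about the shape of an undirected path from $u$ to $v$ inside $Z$: such a path cannot contain a segment $a \to c \leftarrow b$, since $c$ would then have two incoming edges and fail to be $\wi$, so the path is already unimodal, i.e.\ of the form $u \leftarrow^* x \to^* v$ with the witness $x$ being the ``peak'' of the path. You instead induct on the length of the path, carrying along a common ancestor and, in the backward-edge case $v \to w$, repairing it via the single-parent property: every directed path into $w$ must enter through its unique incoming edge, so the old ancestor either reaches $v$ or equals $w$, in which case $v$ takes over. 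What the paper's version buys is brevity and an explicit witness (the peak of the path); what yours buys is a little extra robustness and explicitness: your induction never needs the connecting path to be non-backtracking (the paper's ``no valley'' step tacitly assumes consecutive edges of the path are distinct, as with a shortest path), and it isolates the in-forest structure of zones that the paper exploits only implicitly later (Proposition~\ref{prop:zone:directed:tree}). As you note, neither argument needs acyclicity of $Z$.
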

\begin{proof}
Because any zone is connected, there is an undirected path between $u$ and $v$ within $Z$. This path cannot contain a segment of the form $a \to c \leftarrow b$, for then $c$ would not be $\wi$, contradicting Proposition~\ref{prop:nodes:are:I}. Hence the path must be of the form $u \leftarrow^* x \to^* v$ for some $x \in V_Z$.
\end{proof}

\begin{corollary}
If a zone has a root, it is unique.
\end{corollary}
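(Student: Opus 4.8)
The plan is to obtain uniqueness directly from Proposition~\ref{prop:ancestor:in:zone}, which is the only substantive tool available and was evidently stated with exactly this corollary in mind. I would argue by contradiction: suppose a zone $Z$ has two roots $r_1, r_2 \in V_Z$. Applying Proposition~\ref{prop:ancestor:in:zone} to the pair $(r_1, r_2)$ produces a node $x \in V_Z$ with $r_1 \leftarrow^* x \to^* r_2$, where both directed paths use only edges included in $Z$.

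The key observation is simply that a root, by definition, has no parent inside $Z$, i.e., no edge of $Z$ with source in $Z$ points into it. I would then show that each of the two witnessing directed paths must be empty. Indeed, if the directed path $x \to^* r_1$ had positive length, its final edge would be an edge of $Z$ pointing into $r_1$, exhibiting a parent of $r_1$ inside $Z$ and contradicting that $r_1$ is a root; hence $x = r_1$. By the symmetric argument applied to $x \to^* r_2$, we also get $x = r_2$. Therefore $r_1 = x = r_2$, so the root is unique.

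I do not expect any genuine obstacle here: all the real content lies in Proposition~\ref{prop:ancestor:in:zone}, whose proof already excludes the ``converging'' shape $a \to c \leftarrow b$ within a zone (via $\wi$-ness from Proposition~\ref{prop:nodes:are:I}) and thereby forces any two nodes of $Z$ to share a common ancestor inside $Z$. The corollary is then merely the remark that a common ancestor of two parentless nodes must coincide with each of them.
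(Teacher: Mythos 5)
Your proof is correct and follows exactly the route the paper intends: the corollary is stated without proof precisely because it is the immediate consequence of Proposition~\ref{prop:ancestor:in:zone} that you spell out, namely that a common ancestor of two parentless nodes must equal both of them. Your elaboration of why the two directed paths must be empty is exactly the implicit argument.
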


\begin{proposition}
If a zone $Z$ is acyclic, it has a root.
\end{proposition}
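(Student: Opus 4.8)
The plan is to exhibit a root by tracing edges backwards, exploiting the fact that inside a zone every node has at most one parent. This single-valuedness is the key preliminary observation: by Proposition~\ref{prop:nodes:are:I}, if $e \in E_Z$ then $t(e)$ is $\wi$, so $t(e)$ has at most one incoming edge in $G$ and hence at most one incoming edge in $Z$. Consequently the assignment ``$p$ is the parent of $v$ in $Z$'' defines a partial function $\mathrm{parent} : V_Z \rightharpoonup V_Z$ whose domain of definition is exactly the set of non-root nodes of $Z$.

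First I would pick an arbitrary node $v_0 \in V_Z$ and form the backward chain $v_0, v_1, v_2, \ldots$ defined by $v_{i+1} = \mathrm{parent}(v_i)$, continuing as long as $v_i$ has a parent in $Z$. By the observation above this chain is uniquely determined once $v_0$ is fixed, with each step following the unique incoming zone edge $v_{i+1} \to v_i$ of $v_i$.

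Next I would argue that the chain must terminate at a parentless node. Since $Z$ is finite, a chain that never reached a parentless node would have to revisit some node, and the edges traversed between the two visits would then form a directed cycle in $Z$, contradicting the assumption that $Z$ is acyclic. Hence the chain stops at some $v_k$ with no parent in $Z$, which is by definition a root of $Z$.

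This is essentially the standard fact that a finite directed acyclic graph has a source, so I do not expect any genuine obstacle; the only point needing care is the justification that ``parent'' is well-defined, which is precisely what Proposition~\ref{prop:nodes:are:I} supplies. Note that Proposition~\ref{prop:ancestor:in:zone} is not required for existence, although it confirms that the root reached is independent of the chosen starting node $v_0$, in agreement with the uniqueness recorded in the preceding corollary.
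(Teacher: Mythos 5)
Your proof is correct and follows essentially the same route as the paper's: the paper argues that if $Z$ had no root, following edges backwards (using finiteness) would reveal a directed cycle, contradicting acyclicity. Your version merely elaborates this backward-walk argument, adding the (pleasant but not strictly necessary for existence) observation via Proposition~\ref{prop:nodes:are:I} that the parent map is single-valued inside a zone.
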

\begin{proof}
If not, following the edges in $Z$ backwards would reveal a directed cycle in $Z$.
\end{proof}

\begin{proposition}
\label{prop:zone:directed:tree}
If a zone $Z$ is acyclic, then $Z$ is a directed tree.
\end{proposition}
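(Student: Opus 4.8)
The plan is to show that $Z$ is an \emph{arborescence}: a connected, (directed-)acyclic graph possessing a unique root $r$ such that every node is reachable from $r$ by a directed path, equivalently, one in which $r$ has in-degree $0$ and every other node has in-degree exactly $1$ within $Z$. I would treat uniqueness of the root, reachability from the root, and the in-degree count as the three things to verify, and then assemble them.

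First I would invoke the two immediately preceding results: since $Z$ is acyclic it has a root $r$, and by the corollary this root is unique. Next I would establish that every node $v \in V_Z$ is reachable from $r$ by a directed path lying in $Z$. Applying Proposition~\ref{prop:ancestor:in:zone} to the pair $r,v$ yields a node $x \in V_Z$ with $r \leftarrow^* x \to^* v$. The path witnessing $r \leftarrow^* x$ is a directed path from $x$ to $r$; if it had positive length, then $r$ would have a parent inside $Z$, contradicting that $r$ is a root. Hence $x = r$, and therefore $r \to^* v$, which is the desired reachability.

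I would then pin down the in-degrees inside $Z$. For the lower bound, reachability supplies, for each non-root $v$, a directed path $r \to^* v$ whose final edge is an incoming edge of $v$ in $Z$, so $v$ has at least one incoming edge there. For the upper bound, such a $v$ is the target $t(e)$ of some edge $e \in E_Z$, so by Proposition~\ref{prop:nodes:are:I} it is $\wi$, meaning it has at most one incoming edge in $G$ and hence at most one in $Z$ (as $E_Z \subseteq E_G$). Thus every non-root node has exactly one incoming edge, while $r$, being a root, has none.

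Finally I would assemble these facts: $Z$ is connected (it is a zone) and acyclic by hypothesis, the in-degree tally gives $|E_Z| = |V_Z| - 1$, and every node is reachable from $r$; this is precisely a directed tree rooted at $r$. I expect no serious obstacle here: the only points needing care are the transfer of the $\wi$ property from $G$ to $Z$ in the upper-bound step (immediate from $E_Z \subseteq E_G$) and forcing the ``ancestor'' node $x$ produced by Proposition~\ref{prop:ancestor:in:zone} to coincide with the root, which is exactly where the root's lack of a parent is used.
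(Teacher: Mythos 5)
Your proof is correct and follows essentially the same route as the paper's: both rely on the unique root, on Proposition~\ref{prop:ancestor:in:zone} to get reachability from the root, and on the $\wi$ property (Proposition~\ref{prop:nodes:are:I}) to rule out multiple incoming edges. The only cosmetic difference is that you phrase the final uniqueness step as an in-degree count, whereas the paper argues that two distinct root-to-$v$ paths would force a non-$\wi$ node at their first join point.
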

\begin{proof}
As follows from the preceding propositions, $Z$ is connected and each zone has a unique root $u$.

By Proposition~\ref{prop:ancestor:in:zone}, $u$ has a path to every node $v$ in $Z$. Such an (acyclic) path is moreover unique, for otherwise the first point at which these paths join is not $\wi$, contradicting Proposition~\ref{prop:nodes:are:I}. Thus $Z$ is a directed tree. 
\end{proof}

We also have the following general characterization of bridges.

\begin{proposition}[On Bridges]
\label{prop:on:bridges}
The source of a bridge is a bad leaf of a zone, and the target of a bridge is a root of a zone.
\end{proposition}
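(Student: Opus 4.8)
The plan is to read off both halves of the statement directly from the termination condition of the zoning algorithm, using the key observation that whether a node is \emph{good} is a fixed property of $G$ which does not change as the algorithm runs: the graph $G$ is never modified, so a node that is good (resp.\ bad) remains so throughout. Note first that every node lies in exactly one zone, since nodes start as singleton zones and are never removed. Let $e$ be a bridge, so $e$ lies in no zone. Since the algorithm has terminated, its joining step can no longer be applied; as $e$ is an edge not included in any zone, this forces $s(e)$ to be \emph{bad}, for otherwise we could still join the zones of $s(e)$ and $t(e)$ along $e$.

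To show that $s(e)$ is a \emph{leaf} of its zone $Z_s$, I would argue that $s(e)$ has no outgoing edge included in $Z_s$. Indeed, any such edge $e'$ with $s(e') = s(e)$ would have been added to a zone by the joining step, which requires $s(e')=s(e)$ to be good at the moment it was applied; but goodness is fixed and we have just seen that $s(e)$ is bad, a contradiction. Hence $s(e)$ has no children inside $Z_s$, i.e.\ it is a leaf, and together with the previous paragraph it is a \emph{bad} leaf.

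For the target I would show that $t(e)$ is a root of its zone $Z_t$, i.e.\ that no incoming edge of $t(e)$ lies in $Z_t$. Suppose for contradiction that some $e'' \in E_{Z_t}$ has $t(e'') = t(e)$. By Proposition~\ref{prop:nodes:are:I}, $t(e'')=t(e)$ is then $\wi$, hence has at most one incoming edge. But $e$ (a bridge, in no zone) and $e''$ (included in $Z_t$) are distinct edges both targeting $t(e)$, so $t(e)$ has at least two incoming edges --- a contradiction. Therefore $t(e)$ has no parent inside $Z_t$ and is a root.

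The argument is short once this viewpoint is fixed, so I do not expect a serious computational obstacle. The main subtlety --- and the step I would be most careful about --- is the invariance of goodness under the iteration (it is a property of $G$, not of the evolving zoning) together with the clean separation between the bridge $e$ and the zone edges $e'$, $e''$; conflating these is the only way the reasoning could break. A minor point is that ``leaf of a zone'' is used informally, so I would make explicit that I mean a node with no outgoing edge included in its zone.
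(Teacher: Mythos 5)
Your proof is correct and follows essentially the same route as the paper: the source half is argued identically (termination of the joining step forces $s(e)$ bad, and badness prevents any outgoing edge from being joined, so $s(e)$ is a leaf), and the target half reaches the same contradiction about two incoming edges, merely invoking Proposition~\ref{prop:nodes:are:I} directly where the paper re-derives that fact inline via the goodness of the parent. Your explicit remark that goodness is a fixed property of $G$, unchanged by the iteration, is a sound clarification of what the paper leaves implicit.
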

\begin{proof}
If $e$ is a bridge, $s(e)$ must be bad. If $s(e)$ is bad, none of its outgoing edges have been joined along. Hence $s(e)$ is a leaf in $Z$.

If a bridge $e$ targets a non-root $t(e)$ of a zone $Z$, then $t(e)$ is not $\wi$, since it has at least two incoming edges. Thus the parent $p$ of $t(e)$ inside $Z$ is bad.
But this contradicts that $p$ must be good since it has an edge to $t(e)$ inside $Z$. Hence $t(e)$ must be a root.
\end{proof}

Although acyclic zones are directed trees, not every zone corresponds directly to a term encoding $\term{t}^\enc$ for some term~$\term{t}$. For instance, for the 3-zone graph 
\begin{tikzcd}[column sep=4mm]
f \arrow[r, "1"] & a & f \arrow[l, "1"']
\end{tikzcd}, with $\#(f) = 1$ and $\#(a) = 0$, only the \partition containing the node labeled with $a$ corresponds to a term encoding. But we have the following result.

\store{proposition}{Zones as Term Encodings}{prop:zone:to:term:encoding}{
    If every bad node of an acyclic zone $Z$ is relabeled with $\bot$, then $Z$ is isomorphic to a term encoding~$\term{t}^\enc$.
}

\begin{proof}
    Every acyclic zone is structurally a directed tree. All inner nodes (and some leaves labeled with constants $a \in \Sigma$) are good, meaning they are labeled with $\Sigma$ and out-well-formed; and all of their children are in-well formed and included into the zone by the zoning algorithm. Since bad nodes are leaves,  relabeling them with $\bot$ essentially makes them represent variables. To establish an isomorphism between a zone and a term encoding, one simply has to rename the identity of every good node to its position in this tree, and the identity of every bad node to some unique $x \in \xvar$.
\end{proof}

We will now show that relabeling bad nodes with $\bot$ does not meaningfully affect the rewriting behavior in a graph $G$. Intuitively, this is because matches cannot cross zones, as shown by the following results. Recall the terminology of Definition~\ref{def:variable:heads:symbol:vertices}.

\begin{lemma}
\label{lemma:symbo:vertices:to:good:nodes}
A match morphism $m : \term{l}^\enc \to G$ (for a rule encoding $\rho^\enc$) maps symbol vertices $v \in V_{\term{l}^\enc}$ onto good nodes.
\end{lemma}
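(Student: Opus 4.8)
The plan is to exploit the \emph{strong match} property of the \pbpostrong match square. Since the leftmost square is a pullback along the mono $t_L$, the morphism $m$ is itself monic, and $m(\term{l}^\enc)$ is exactly the $\alpha$-preimage $\alpha^{-1}(t_L(\term{l}^\enc))$ in $G$ (the very fact already used in the proof of Lemma~\ref{lem:cycle:edges}), while commutativity gives $\alpha \circ m = t_L$. Fix a symbol vertex $v \in V_{\term{l}^\enc}$ with label $f \in \Sigma$ and arity $\#(f) = n$, and put $w = m(v)$. Recalling the definition of a good node, I must show (a) $w$ is $\wo$ and (b) every child of $w$ in $G$ is $\wi$.

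For (a), I first pin down the label of $w$. Since $m$ and $\alpha$ are \GraphLattice{\Sigma^\enc}-morphisms they do not decrease labels, so $f = \lbl(v) \le \lbl(w)$ and $\lbl(w) \le \lbl(\alpha(w)) = \lbl(t_L(v)) = f$, the last equality because the context closure leaves the labels of symbol vertices untouched. As $f$ is an atom of the flat lattice, these two inequalities force $\lbl(w) = f \in \Sigma$. Next I count outgoing edges. In $\term{l}^\enc$ the vertex $v$ has exactly $n$ outgoing edges labeled $1, \ldots, n$; and in $L' = \contextclosure{\term{l}^\enc}$ the node $t_L(v)$ gains \emph{no} further outgoing edges, since the closure only adds $(\upcxtnode, \rootof{\term{l}^\enc})$, $(\upcxtnode,\upcxtnode)$ and, per variable, $(x,x')$, $(x',x')$, none of which emanate from a symbol vertex. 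Hence every outgoing edge of $t_L(v)$ lies in $t_L(\term{l}^\enc)$. Now for any outgoing edge $d$ of $w$, the edge $\alpha(d)$ is outgoing from $\alpha(w) = t_L(v)$, so $\alpha(d) \in t_L(\term{l}^\enc)$; by the strong match property $d = m(d')$ for some edge $d'$, and monicity of $m$ gives $s(d') = v$. Together with the reverse inclusion (images of $v$'s outgoing edges) this yields a bijection between the outgoing edges of $v$ and those of $w$, label-preserving by the same atom argument applied to edge labels in $\nat^+$. Thus $w$ carries precisely the outgoing edges $1, \ldots, n$ and is $\wo$.

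For (b), the previous paragraph shows that every child $c$ of $w$ is $m(c')$ for a child $c'$ of $v$ in $\term{l}^\enc$. Since $c'$ lies strictly below $v$ it is not the root, and the context closure attaches a new incoming edge only to the root, so $t_L(c')$ has exactly one incoming edge in $L'$, namely the image of the tree edge $v \to c'$, which lies in $t_L(\term{l}^\enc)$. Suppose for contradiction that $c$ had two distinct incoming edges $d_1, d_2$ in $G$. Then $\alpha(d_1), \alpha(d_2)$ are incoming to $t_L(c')$ and hence both equal its unique incoming edge, which lies in $t_L(\term{l}^\enc)$; the strong match property gives $d_i = m(e_i)$, and $t_L(e_1) = \alpha(d_1) = \alpha(d_2) = t_L(e_2)$ forces $e_1 = e_2$ by monicity of $t_L$, whence $d_1 = d_2$ by monicity of $m$ --- a contradiction. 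So each child of $w$ is $\wi$, and combined with (a) the node $w$ is good.

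I expect the crux to be the edge-count in (a) and the whole of (b), where it is essential that the match square is a genuine pullback rather than a mere commuting square: it is precisely $m(\term{l}^\enc) = \alpha^{-1}(t_L(\term{l}^\enc))$ that prevents $w$ or its children from acquiring spurious incident edges in the host graph $G$ with no counterpart in $\term{l}^\enc$. The label bookkeeping, by contrast, is routine once one observes that the relevant pattern labels are atoms of the flat lattice squeezed between the two monotonicity inequalities coming from $m$ and $\alpha$.
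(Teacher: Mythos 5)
Your proof is correct and follows essentially the same route as the paper's: both pin down $\lbl(m(v))$ by squeezing between the monotonicity constraints of $m$ and $\alpha$ (the paper phrases this as a two-case analysis), both bound the outgoing edges of $m(v)$ using the pullback/strong-match property ($m(\term{l}^\enc) = \alpha^{-1}(t_L(\term{l}^\enc))$), and both rule out extra incoming edges on children by exploiting that in $L'$ the children of $t_L(v)$ have a single incoming edge lying in the image of $t_L$, so that two distinct preimages would contradict the strong match property. Your write-up is somewhat more explicit (e.g., the edge bijection and the use of monicity of $t_L$ and $m$), but the key ideas coincide with the paper's proof.
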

\begin{proof}
We must show that $m(v)$ is $\wo$ and that all of $m(v)$'s children are $\wi$.

First, we show that \emph{$m(v)$ is $\wo$.} Because $v$ is a symbol vertex, $\lbl(v) \in \Sigma$. Since morphisms do not decrease labels, either (a)~$\lbl(v) = \lbl(m(v))$, or (b)~$\lbl(v) < \lbl(m(v))$.

In case (a), we must show that $m(v)$ has precisely $\#{(\lbl(m(v)))} = \#{(\lbl(v))}$ outgoing edges labeled with $1,2,\ldots,\#{(\lbl(v))}$. By monicity of $m$ and the definition of rule encodings, we know that it has these edges at least once. Moreover, $m(v)$ cannot have additional outgoing edges, since these cannot be suitably mapped by $\alpha$ into $L'$ without violating the strong match property. 

In case (b), we obtain a contradiction. For note that $t_L : \term{l^\enc} \to \contextclosure{\term{l^\enc}}$ preserves labels for nodes labeled from $\Sigma$, so that $\lbl(t_L(v)) = \lbl(v)$. Furthermore, since $m$ enables a rewrite step, $t_L = \alpha \circ m$ and hence $\lbl((\alpha \circ m)(v)) = \lbl(v)$. This implies that $\alpha$ decreases the label on $m(v)$, which is not allowed by the $\leq$ requirement on morphisms.

Second, we establish that \emph{all of $m(v)$'s children are $\wi$.} 
Observe that for symbol vertices $v$, all incoming edges of children of $t_L(v)$ (i) have their source in $t_L(v)$ and (ii) are in the image of $t_L$.
For a contradiction, assume a child $u$ of $m(v)$ has multiple incoming edges $e,e'$. Then using that $\alpha(m(v)) = t_L(v)$ (by the strong match property) and that $\alpha(u)$ is a child of $t_L(v)$, by observation~(i) $\alpha(s(e)) = \alpha(s(e'))$. Since there are no parallel edges in $L'$, $\alpha(e) = \alpha(e')$. By~(ii) $\alpha(e)$ is in the image of $t_L$. Thus multiple elements are mapped onto the same element in $L'$. This violates the strong match property. Contradiction.
\end{proof}

\store{lemma}{Matches Respect Boundaries}{lemma:matches:in:one:zone}{
    Let $\rho = \term{l} \to \term{r}$ be a TRS rule, and consider the translation $\rho^\enc$.
    Then for any match morphism $m : \term{l}^\enc \mono G$, the image $m(\term{l^\enc})$ lies in precisely one \partition.
}
\begin{proof}
    Because $\term{l^\enc}$ is connected, so is $m(\term{l}^\enc)$. So if a counterexample to the lemma exists, it involves a bridge. Let $m(e)$, the image of an $e \in E_{\term{l}^\enc}$, be such a bridge. By Proposition~\ref{prop:on:bridges}, $s(m(e)) = m(s(e))$ is a bad leaf of a zone $Z$. Hence $s(e) \in V_{\term{l}^\enc}$ is a variable head by the contrapositive of Proposition~\ref{lemma:symbo:vertices:to:good:nodes}. Since variable heads are leaves in $\term{l}^\enc$, this contradicts that $s(e)$ has $e \in E_{\term{l}^\enc}$ for an outgoing edge.
\end{proof}

\begin{figure}
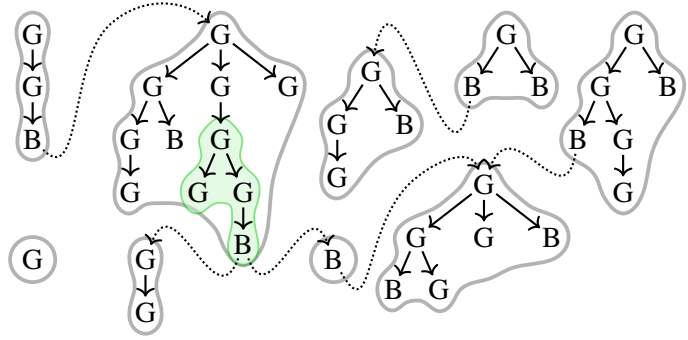

    \centering
    \include{sections/images/zoning}
    \caption{A zoning of a cycle-free graph with three components. Zone borders (gray), good (G) and bad (B) nodes, bridges (dotted) and a match (green) are indicated.}
    \label{fig:zoning:example}
\end{figure}

Figure~\ref{fig:zoning:example} is an abstract depiction of a zoning, and exemplifies the properties established thus far.

\store{proposition}{Bad Node Labels Are Irrelevant}{lem:irrelevance:inert:labels}{
    Let $G^{[\lbl(v) := l]}$ denote the graph obtained by changing  the label of $v \in V_G$ to $l \in \labels$. If $v \notin V_G$, $G^{[\lbl(v) := l]} = G$.
    
    For bad $v \in V_G$ and any $l \in \labels$, if $G \to H$ is a rewrite step via a translated TRS rule $\rho^\enc$ and adherence morphism $\alpha$, then $G^{[\lbl(v) := l]} \to H^{[\lbl(v) := l]}$ is a rewrite step via $\rho^\enc$ and $\alpha$.
}

\begin{proof}
  In a rewrite step, bad nodes are either matched by variable heads, or lie outside the image of $t_L$. In both cases, the label does not influence the application condition, since any label $l$ with $\bot \leq l \leq \top$ is allowed. Moreover, the node is either preserved (and its label unchanged), or deleted. In either case the statement holds.
\end{proof}

\store{theorem}{}{thm:terminating:terms:iff:terminating:graphs}{
Let $R$ be a linear TRS.
    $R$ is terminating on $\ter$ iff $R^\enc$ is terminating on
    {\FinGraphLattice{\Sigma^\enc}}.
}

\begin{proof}
    Direction $\Longleftarrow$ is Lemma~\ref{lemma:terminating:on:graphs:terminating:on:terms}.
    
    For direction $\Longrightarrow$, we prove the contrapositive. By Corollary~\ref{corollary:terminating:iff:cycle-free:terminating}, we may assume $G$ is cycle-free, and thus acyclic. So suppose $R^\enc$ admits an infinite rewrite sequence $\tau_G = G \to G' \to \cdots$ rooted in a cycle-free, finite graph $G$.
    
    Because matches respect zone boundaries, the number of zones is finite, and zones are never created by rewrite steps, there exists a zone $Z$ of $G$ in which a match is fixed and rewritten infinitely often. This zone is at no point affected by matches in other zones, since zones can only affect other zones by completely deleting them. Similarly, due to cycle-freeness, it is easy to see that the bridges and zones connected to $Z$ do not affect rule applicability in $Z$. Hence we can restrict $G$ to $Z$, and construct an infinite rewrite sequence $\tau_Z = Z \to Z' \to Z'' \to \cdots$.
    
    By relabeling every bad node of starting term $Z$ with $\bot$, the existence of an infinite rewrite sequence is not disturbed using Proposition~\ref{lem:irrelevance:inert:labels}. Furthermore, $Z$ is now isomorphic to a term encoding $\term{t}^\enc$ for some term $\term{t}$ (Proposition~\ref{prop:zone:to:term:encoding}). Using the fact that the encoding is closed  (Lemma~\ref{lem:encoding:is:closed}) and that rewriting is defined modulo isomorphism, we can obtain an infinite rewrite sequence on terms. Thus $R$ is also not terminating.
\end{proof}

\begin{remark}
\label{remark:nolte}
Our result may be compared to one due to Nolte~\cite[Chapter 6]{nolte2019}. Nolte first defines two encodings of TRSs into term graph rewriting systems, a basic encoding and an extended encoding. These encodings preserve neither termination nor confluence, and are not embeddings. He then shows that for term graph systems obtained by the basic encoding, there exists a globally termination-preserving encoding into graph rewriting systems (DPO)~\cite[Theorem 6.3]{nolte2019}. So although Nolte's approach is similar to ours in spirit, it does not constitute a globally termination-preserving embedding of TRSs into graph rewriting systems.
\end{remark}

\newcommand{\notimplies}{\;\not\!\!\!\implies}

\begin{remark}[Confluence]
\label{rem:confluence:fails}
    Although $\Longleftarrow$ of Theorem~\ref{thm:terminating:terms:iff:terminating:graphs} holds for confluence as well, $\Longrightarrow$ does not, even if graphs are assumed to be connected, cycle-free and well-labeled. Namely, consider $\Sigma = \{ f,g,h,a,b \}$ with $\#(f) = \#(g) = \#(h) = 1$ and $\#(a) = \#(b) = 0$, and the confluent TRS $R = \{ g(x) \to a , h(x) \to b \}$. Then for the graph
    \begin{tikzcd}[column sep=3mm]
    g \arrow[r, "1"] & f \arrow[r, "1"] & a & f \arrow[l, "1"'] & h \arrow[l, "1"']
    \end{tikzcd}
    both $a$ and $b$ are $R^\enc$-normal forms.
    
    If graphs may be disconnected, rule $g(x) \to a$ even constitutes a counter-example by itself. For the type graph of its rule encoding, a disjoint component $H$ can either be mapped onto the upper context closure (preserving $H$) or the lower context closure (deleting $H$).
\end{remark}

\section{Discussion}
\label{sec:discussion}

We have defined an encoding of linear term rewriting into \pbpostrong rewriting that is both an embedding and globally termination-preserving. These properties are achievable because a \pbpostrong rule allows (i)~specifying where parts of a context may occur around a pattern, (ii)~ensuring that these parts are disjoint, and (iii)~deleting such parts (in our case study, such parts correspond to variable substitutions).

We submit that a rewriting framework $\mathcal{F}$ can be said to be a proper generalization of some other framework $\mathcal{G}$ if there exists an embedding $\encoding$ from $\mathcal{G}$ to $\mathcal{F}$. In this sense, \pbpostrong is a proper generalization of linear term rewriting (and DPO is not). 
Often we want the encoding $\encoding$ to have additional properties such as the global preservation of certain properties (e.g., termination).
For instance, the embedding that interprets the TRS rule $\rho = a(b(x)) \to b(a(x))$ as a mere swap of symbols, and thus as applicable in any context, is an embedding that does not preserve termination globally.
(Note that such an alternative embedding is also expressible in \pbpostrong.)

The fact that a certain property-preserving embedding is possible is an interesting expressiveness result for the embedding formalism. Moreover, it opens up a path to reduction arguments, as was also considered by Nolte~\cite{nolte2019} in a different setting (Remark~\ref{remark:nolte}). In our case, if a \pbpostrong rewrite system is (isomorphic to) the encoding of a TRS (as defined in Definition~\ref{def:translate:trs:rule}), termination can be decided by considering the decoded TRS and forgetting about the complexities of graphs.
Our proof technique extends to more general \pbpostrong rewrite systems as long as the following conditions are met: the pattern of the rules is tree-like (possibly with loops on the nodes of the pattern), the outgoing edges of nodes in the pattern have distinct labels, and the `context' and `variable graphs' are disconnected (except through the pattern) and are not duplicated by the rule.

Our provided embedding into \pbpostrong does not preserve confluence globally. As shown in Remark~\ref{rem:confluence:fails}, the key problem is that an assumption true for terms, namely connectedness, does not hold for graphs. For the same reason it is currently impossible to define a termination-preserving embedding of non-right-linear term rewriting into \pbpostrong: whenever a variable is duplicated, it may also lead to the duplication of any number of disjoint components in the graph that are mapped onto the corresponding variable closure. For future work, we intend to investigate extensions of our encoding that do preserve confluence and termination globally even when variables are duplicated.

Adopting a broader perspective, we hope that our encoding contributes to the development of termination techniques for graph rewriting. There have been recent advances in proving termination of graph rewriting; see for instance work by Bruggink et al.~\cite{bruggink2015proving} and Dershowitz et al.~\cite{dershowitz2018gpo}. In~\cite{dershowitz2018gpo}, recursive path orders are generalized from term rewriting to graph transformation by decomposing the graph into strongly connected components and a well-founded structure between them. A difficulty in this approach is that all possible cycles around the pattern of a rule must be considered. We hope that the technique can be extended to \pbpostrong and strengthened by making use of the application conditions that exclude certain cycles around and through the pattern.

Finally, we believe that our result is a step towards modeling other rewriting formalisms such as lambda calculus and higher order rewriting using graph rewriting. These formalisms also rewrite tree structures, and we expect that extensions of our zoning construction will be instrumental for this purpose. Our goal in this respect is to model these systems in such a way that important properties like termination and confluence are preserved globally, while at the same time keeping the modeling overhead minimal (e.g., avoiding auxiliary rules and rewrite steps that increase the length of rewrite sequences).

\subsubsection*{Acknowledgments} We thank anonymous reviewers for useful suggestions and corrections.
Both authors received funding from the Netherlands Organization for Scientific Research (NWO) under the Innovational Research Incentives Scheme Vidi (project.\ No.\ VI.Vidi.192.004).

\nocite{*}
\bibliographystyle{eptcs}
\bibliography{main}

\end{document}